\newcommand{\ket}[1]{\vert #1 \rangle}
\newcommand{\bra}[1]{\langle #1 \vert}
\newcommand\proj[1]{\vert #1 \rangle \langle #1 \vert}
\newcommand{\opn}[1]{\operatorname{#1}}
\DeclareMathOperator{\tr}{Tr}  
\newcommand*{\cD}{\mathcal{D}}
\newcommand*{\cS}{\mathcal{S}}
\newcommand{\1}{\mathbbm{1}}
\newcommand\summary[1]{\vspace{0.06in}\noindent\textbf{#1}}
\newtheorem{theorem}{Theorem}
\newtheorem{lemma}[theorem]{Lemma}
\newcommand{\sectionuser}[1]{{\bf #1 \\}}
\newcommand{\secfont}{\fontsize{11}{12}\usefont{OT1}{phv}{b}{n}}
\newcommand{\oncite}[1]{\scalebox{1.3}[1.3]{\raisebox{-0.80ex}{\cite{#1}}}}
\renewcommand{\figurename}{{\usefont{OT1}{phv}{b}{n}Fig.}\sffamily}
\renewcommand{\thefigure}{{\usefont{OT1}{phv}{b}{n}\arabic{figure}}}
\begin{document}

\title{Towards the standardization of quantum state verification using optimal strategies}

\newcommand{\physics}{National Laboratory of Solid-state Microstructures, School of Physics, \\
Collaborative Innovation Center of Advanced Microstructures, \\
State Key Laboratory for Novel Software Technology, \\
Department of Computer Science and Technology, \\
Nanjing University, Nanjing 210093, China}%

\newcommand{\sust}{Shenzhen Institute for Quantum Science and Engineering, Southern University of Science and Technology, Shenzhen 518055, China}%

\author{Xinhe Jiang}%
\thanks{These authors contributed equally to this work.}
\affiliation{\physics}%

\author{Kun Wang}%
\thanks{These authors contributed equally to this work.}
\affiliation{\sust}%

\author{Kaiyi Qian}%
\thanks{These authors contributed equally to this work.}
\affiliation{\physics}%

\author{Zhaozhong Chen}%
\thanks{These authors contributed equally to this work.}
\affiliation{\physics}%

\author{Zhiyu Chen}%
\affiliation{\physics}%

\author{Liangliang Lu}%
\affiliation{\physics}%

\author{Lijun Xia}%
\affiliation{\physics}%

\author{Fangmin Song}%
\affiliation{\physics}%

\author{Shining Zhu}%
\affiliation{\physics}%

\author{Xiaosong Ma}%
\email{xiaosong.ma@nju.edu.cn}%
\affiliation{\physics}%


\begin{abstract}
Quantum devices for generating entangled states have been extensively studied and widely used. As so, it becomes necessary to verify that these devices truly work reliably and efficiently as they are specified. Here, we experimentally realize the recently proposed two-qubit entangled state verification strategies using both local measurements (nonadaptive) and active feed-forward operations (adaptive) with a photonic platform. About 3283/536 number of copies ($N$) are required to achieve a 99\% confidence to verify the target quantum state for nonadaptive/adaptive strategies. These optimal strategies provide the Heisenberg scaling of the infidelity $\epsilon$ as a function of $N$ ($\epsilon${}$\sim${}$N^r$) with the parameter $r=-1$, exceeding the standard quantum limit with $r=-0.5$. We experimentally obtain the scaling parameter of $r=-0.88\pm$0.03 and $-0.78\pm$0.07 for nonadaptive and adaptive strategies, respectively. Our experimental work could serve as a standardized procedure for the verification of quantum states.
\end{abstract}
\maketitle%

\noindent{\secfont{INTRODUCTION}}
\vspace*{0.1cm}

\noindent Quantum state plays an important role in quantum information processing.~\cite{Nielsen2010} Quantum devices for creating quantum states are building blocks for quantum technology. Being able to verify these quantum states reliably and efficiently is an essential step towards practical applications of quantum devices.~\cite{paris2004quantum} Typically, a quantum device is designed to output some desired state $\rho$, but the imperfection in the device's construction and noise in the operations may result in the actual output state deviating from it to some random and unknown states $\sigma_i$. A standard way to distinguish these two cases is quantum state
tomography.~\cite{PhysRevLett.111.160406,PhysRevLett.105.150401,Haah:2016:STQ:2897518.2897585,Donnell:2016:EQT:2897518.2897544,Donnell:2017:EQT:3055399.3055454} However, this method is both time-consuming and computationally challenging.~\cite{2005Nature438.643H,2014NaPhoton8621C}
Non-tomographic approaches have also been proposed to accomplish the task,~\cite{PhysRevLett.94.060501,PhysRevLett.106.230501,PhysRevLett.107.210404,2015NatCommu6E8498A,PhysRevLett.115.220502,2016NatCo713251M,PhysRevX.8.021060,Badescu:2019:QSC:3313276.3316344}
yet these methods make some assumptions either on the quantum states or on the available operations. It is then natural to ask whether there exists an efficient non-tomographic approach to accomplish the task?

The answer is affirmative. Quantum state verification protocol checks the device's quality efficiently. Various studies have been explored using local measurements.~\cite{PhysRevLett.115.220502,morimae2017verification,PhysRevX.8.021060,takeuchi2019resource} Some earlier works considered the verification of maximally entangled states.~\cite{HayashiJPAMG2006,PhysRevA.74.062321.2006,hayashi2008statistical,HayashiNJP2009} In the context of hypothesis testing, optimal verification of maximally entangled state is proposed in ref.~\oncite{HayashiJPAMG2006}. Under the independent and identically distributed setting, Hayashi et al.~discussed the hypothesis testing of the entangled pure states.~\cite{HayashiNJP2009} In a recent work,~\cite{PhysRevLett.120.170502} Pallister et al.~proposed an optimal strategy to verify non-maximally entangled two-qubit pure states under locally projective and nonadaptive measurements. The locality constraint induces only a constant-factor penalty over the nonlocal strategies. Since then, numerous works have been done along this line of research,~\cite{zhu2019efficient1,zhu2019efficient,wang2019optimal,yu2019optimal,li2019efficient,Liu2019Efficient,
li2019optimal} targeting on different states and measurements. Especially, the optimal verification strategies under local operations and classical communication are proposed recently,~\cite{wang2019optimal,yu2019optimal,li2019efficient} which exhibit better efficency. We also remark related works by Dimi\'{c} et al.~\cite{Dimi2018} and Saggio et al.,~\cite{Saggio2019} in which they developed a generic protocol for efficient entanglement detection using local measurements and with an exponentially growing confidence versus the number of copies of the quantum state.

In this work, we report an experimental two-qubit state verification procedure using both optimal nonadaptive (local measurements) and adaptive (active feed-forward operations) strategies with an optical setup. Compared with previous works merely on minimizing the number of measurement settings,~\cite{PhysRevLett.117.210504,bavaresco2018measurements,friis2019entanglement} we also minimize the number of copies (i.e.,~coincidence counts in our experiment) required to verify the quantum state generated by the quantum device. We perform two tasks--Task A and Task B. With Task A, we obtain a fitting infidelity and the number of copies required to achieve a 99\% confidence to verify the quantum state. Task B is performed to estimate the confidence parameter $\delta$ and infidelity parameter $\epsilon$ versus the number of copies $N$. We experimentally compare the scaling of $\delta$-$N$ and $\epsilon$-$N$ by applying the nonadaptive strategy~\cite{PhysRevLett.120.170502} and adaptive strategy~\cite{wang2019optimal,yu2019optimal,li2019efficient} to the two-qubit states. With our methods, we obtain a comprehensive judgement about the quantum state generated by a quantum device. Present experimental and data analysis workflow may be regarded as a standard procedure for quantum state verification.

\vspace*{0.5cm}
\noindent{\secfont{RESULTS}}
\vspace*{0.1cm}

\noindent\sectionuser{Quantum state verification}\quad
Consider a quantum device $\cD$ designed to produce the two-qubit pure state
\begin{equation}\label{eq:verify-state}
    \ket{\Psi} = \sin\theta\ket{HH} + \cos\theta\ket{VV},
\end{equation}
where $\theta\in[0,\pi/4]$. However, it might work incorrectly and actually outputs independent two-qubit fake states $\sigma_1,\sigma_2,\cdots,\sigma_N$ in $N$ runs. The goal of the verifier is to determine the fidelity threshold of these fake states to the target state with a certain confidence. We remark that the state for $\theta=\pi/4$ is the maximally entangled state and $\theta=0$ is the product state. As special cases of the general state in Eq.~\eqref{eq:verify-state}, all the analysis methods presented in the following can be applied to the verification of maximally entangled state and product state. The details of the verification strategies for maximally entangled state and product state are given in Supplementary Notes 1.C and 1.D. Previously, theoretical~\cite{HayashiJPAMG2006,HayashiNJP2009,zhu2019optimal} and experimental~\cite{PhysRevA.74.062321.2006} works have studied the verification of maximally entangled state. Here, we focus mainly on the verification of non-maximally entangled state in the main text, which is more advantageous in certain experiments comparing to maximally entangled state. For instance, in the context of loophole-free Bell test, non-maximally entangled states require lower detection efficiency than maximally entangled states~\cite{Eberhard1993,GiustinaNature2013,GiustinaPRL2015,ShalmPRL2015}. The details and experimental results for the verification of maximally entangled state and product state are shown in the Supplementary Notes 2 and 4. To realize the verification of our quantum device, we perform the following two tasks in our experiment (see Fig.~1):
\begin{description}
	\item[\textbf{Task A}] Performing measurements on the fake states copy by copy according to verification strategy, and make statistics on the number of copies required before we find the first fail event. The concept of Task A is shown in Fig.~1b.
	\item[\textbf{Task B}] Performing a fixed number ($N$) of measurements according to verification strategy, and make statistics on the number of copies that pass the verification tests. The concept of Task B is shown in Fig.~1c.
\end{description}

\begin{figure}[!htbp]
  \centering
  	\includegraphics[width=0.5\textwidth]{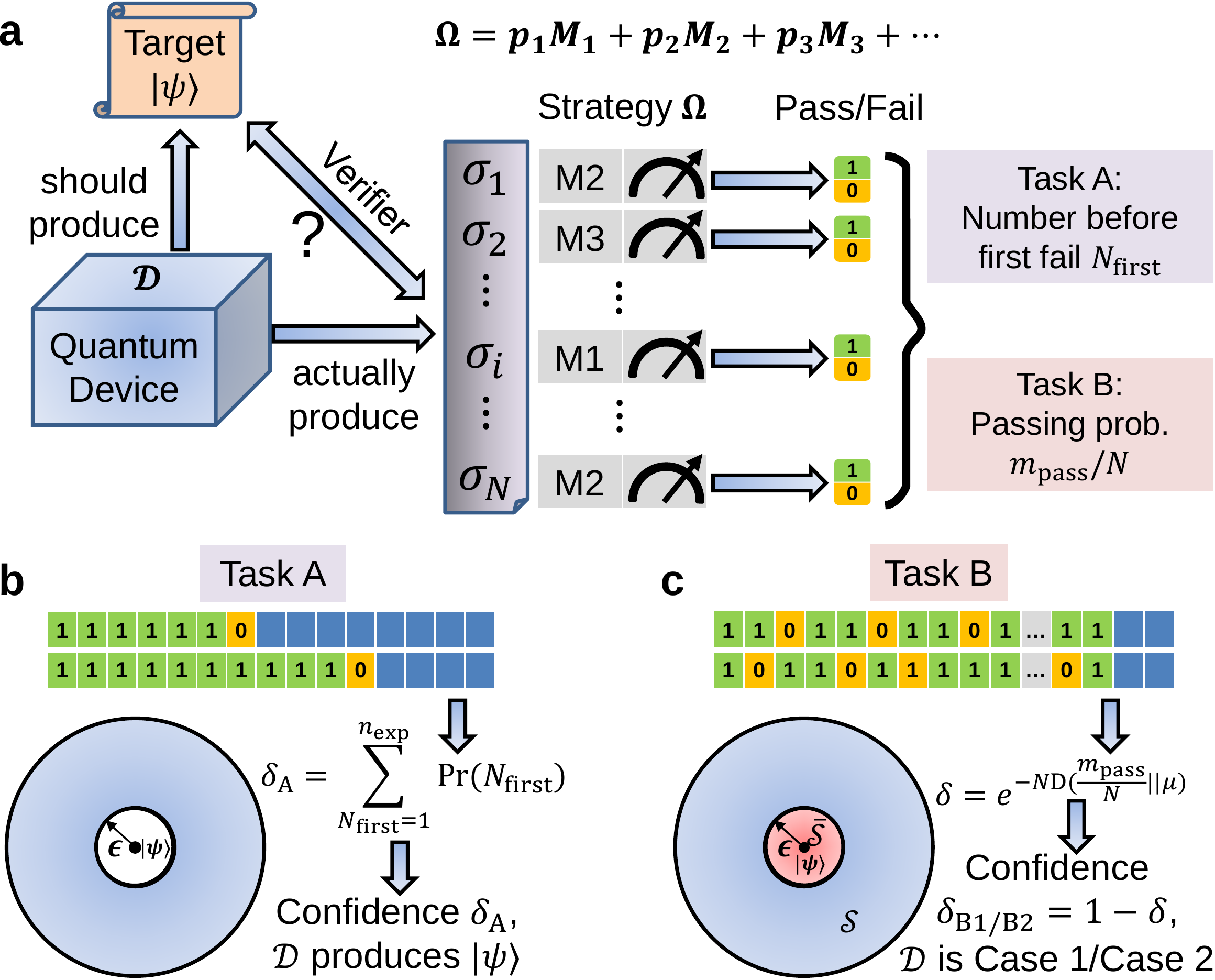}
  \caption{Illustration of quantum state verification strategy. \textbf{a}~Consider a quantum device $\cD$ designed to produce the two qubit pure state $\ket{\psi}$. However, it might work incorrectly and actually outputs two-qubit fake states $\sigma_1,\sigma_2,\cdots,\sigma_N$ in $N$ runs. For each copy $\sigma_i$, randomly projective measurements $\{M_1, M_2, M_3, \cdots\}$ are performed by the verifier based on their corresponding probabilities $\{p_1,p_2,p_3,\cdots\}$. Each measurement outputs a binary outcome 1 for pass and 0 for fail. The verifier takes two tasks based on these measurement outcomes. \textbf{b}~Task A gives the statistics on the number of copies required before finding the first fail event. From these statistics, the verifier obtains the confidence $\delta_\text{A}$ that the device outputs state $\ket{\psi}$. \textbf{c}~Task B performs a fixed number ($N$) of measurements and makes a statistic on the number of copies ($m_\text{pass}$) passing the test. From these statistics, the verifier can judge with a certain confidence $\delta_\text{B1}$/$\delta_\text{B2}$ that the device belongs to Case 1 or Case 2.}
  \label{fig1:Principle}
\end{figure}
\textbf{Task A} is based on the assumption that there exists some $\epsilon>0$ for which the fidelity $\langle\Psi\vert\sigma_i\vert\Psi\rangle$ is either $1$ or satisfies $\langle\Psi\vert\sigma_i\vert\Psi\rangle\leq1-\epsilon$ for all $i\in\{1,\cdots,N\}$ (see Fig.~1b). Our task is to determine which is the case for the quantum device. To achieve \textbf{Task A}, we perform binary-outcome measurements from a set of available projectors to test the state. Each binary-outcome measurement $\{M_l, \1-M_l\}$ ($l=1,2,3,\cdots$) is specified by an operator $M_l$, corresponding to passing the test. For simplicity, we use $M_l$ to denote the corresponding binary measurement. This measurement is performed with probability $p_l$. We require the target state $\ket{\Psi}$ always passes the test, i.e.,~$M_l\ket{\Psi}=\ket{\Psi}$. In the bad case ($\langle\Psi\vert\sigma_i\vert\Psi\rangle\leq1-\epsilon$), the maximal probability that $\sigma_i$ can pass the test is given by~\cite{PhysRevLett.120.170502,zhu2019efficient1}
\begin{equation}\label{eq:fail-probability}
  \max_{\bra{\Psi}\sigma_i\ket{\Psi}\leq 1-\epsilon }\tr(\Omega \sigma_i)
= 1 - [1-\lambda_2(\Omega)]\epsilon := 1-\Delta_\epsilon,
\end{equation}
where $\Omega=\sum_l p_l M_l$ is called an strategy, $\Delta_\epsilon$ is the probability $\sigma_i$ fails a test and $\lambda_2(\Omega)$ is the second largest eigenvalue of
$\Omega$. Whenever $\sigma_i$ fails the test, we know immediately that the device works incorrectly. After $N$ runs,
$\sigma_i$ in the incorrect case can pass all these tests with probability being at most
$[1-[1-\lambda_2(\Omega)]\epsilon]^N$. Hence to achieve confidence $1 - \delta$, it suffices to conduct $N$ number of
measurements satisfying~\cite{PhysRevLett.120.170502}
\begin{equation}\label{eq:num-measurements}
N \geq \frac{\ln\delta}{\ln[1-[1-\lambda_2(\Omega)]\epsilon]}
\approx \frac{1}{[1-\lambda_2(\Omega)]\epsilon}\ln\frac{1}{\delta}.
\end{equation}

From Eq.~\eqref{eq:num-measurements} we can see that an optimal strategy is obtained by minimizing the second largest
eigenvalue $\lambda_2(\Omega)$, with respect to the set of available measurements. Pallister et al.~\cite{PhysRevLett.120.170502} proposed an optimal strategy for \textbf{Task A}, using only locally projective
measurements. Since no classical communication is involved, this strategy (hereafter labelled as $\Omega_{\opn{opt}}$) is nonadaptive. Later, Wang et al.~\cite{wang2019optimal}, Yu et al.~\cite{yu2019optimal} and Li et al.~\cite{li2019efficient} independently propose the optimal strategy using one-way local operations and classical communication (hereafter labelled as $\Omega^\rightarrow_{\opn{opt}}$) for two-qubit pure states. Furthermore, Wang et al.~\cite{wang2019optimal} also gives the optimal strategy for two-way classical communication. The adaptive strategy allows general local operations and classical communication measurements and is shown to be more efficient than the strategies based on local measurements. Thus it is important to realize the adaptive strategy in the experiment. We refer to the Supplementary Notes 1 and 2 for more details on these strategies.


In reality, quantum devices are never perfect. Another practical scenario is to conclude with high confidence that the fidelity of the output states are above or below a certain threshold. To be specific, we want to
distinguish the following two cases:
\begin{description}
	\item[\textbf{Case 1}] $\cD$ works correctly -- $\forall i,\bra{\psi}\sigma_i\ket{\psi} > 1-\epsilon$. In this case, we regard the device as ``good''.
	\item[\textbf{Case 2}] $\cD$ works incorrectly -- $\forall i,\bra{\psi}\sigma_i\ket{\psi} \leq 1-\epsilon$. In this case, we regard the device as ``bad''.
\end{description}
We call this \textbf{Task B} (see Fig.~1c), which is different from \textbf{Task A}, since the condition for
`$\cD$ works correctly' is less restrictive compared with that of \textbf{Task A}. It turns out that the verification strategies proposed for \textbf{Task A} are readily applicable to \textbf{Task B}. Concretely, we perform the
nonadaptive verification strategy $\Omega_{\opn{opt}}$ sequentially in $N$ runs and count the number of passing
events $m_{\text{pass}}$. Let $X_i$ be a binary variable corresponding to the event that $\sigma_i$ passes the test ($X_i=1$)
or not ($X_i=0$). Thus we have $m_{\text{pass}} = \sum_{i=1}^NX_i$.
Assume that the device is ``good'', then from Eq.~\eqref{eq:fail-probability} we can derive that the passing
probability of the generated states is no smaller than $1 - [1-\lambda_2(\Omega_{\opn{opt}})]\epsilon$. We refer to Lemma 3 in the
Supplementary Note 3.A for proof. Thus the expectation of $X_i$ satisfies $\mathbb{E}[X_i]\geq1
- (1-\lambda_2(\Omega_{\opn{opt}}))\epsilon \equiv
\mu$. The independence assumption together with the law of large numbers then
guarantee $m_{\text{pass}}\geq N\mu$, when $N$ is sufficiently large.
We follow the statistical analysis methods using the Chernoff bound in the context of state verification~\cite{Dimi2018,Saggio2019,yu2019optimal,zhang2019experimental}, which is related to the security analysis of quantum key distributions~\cite{ScaraniRMP2009,Hayashi2014}. We then upper bound the probability that the device works incorrectly as
\begin{equation}\label{Large}
\delta\equiv e^{-N\operatorname{D}\left(\frac{m_{\text{pass}}}{N}\middle\lVert\mu\right)},
\end{equation}
where $\operatorname{D}\left(x\lVert y\right):=x\log_2\frac{x}{y}+(1-x)\log_2\frac{1-x}{1-y}$ is the
Kullback-Leibler divergence. That is to say, we can conclude with confidence $\delta_\text{B1}=1-\delta$ that $\cD$
belongs to \textbf{Case 1}. Conversely, if the device is ``bad'', then using the same argument we can
conclude with confidence $\delta_\text{B2}=1-\delta$ that $\cD$ belongs to \textbf{Case 2}. Please refer to the Supplementary Note 3 for rigorous proofs and arguments on how to evaluate the performance of the quantum device for these two cases.

To perform \textbf{Task B} with the adaptive strategy $\Omega^\rightarrow_{\opn{opt}}$, we record the number of passing events $m_\text{pass} = \sum_{i=1}^NX_i$. If the device is ``good'', the passing probability of the generated states is no smaller than $\mu_s \equiv 1 - [1-\lambda_4(\Omega^\rightarrow_{\opn{opt}})]\epsilon$ where $\lambda_4(\Omega^\rightarrow_{\opn{opt}})=\sin^2\theta/(1+\cos^2\theta)$ is the smallest eigenvalue of $\Omega^\rightarrow_{\opn{opt}}$, as proved by Lemma 5 in Supplementary Note 3.B. The independence assumption along with the law of large numbers guarantee that $m_{\text{pass}}\geq N\mu_s$, when $N$ is sufficiently large. On the other hand, if the device is ``bad'', we can prove that the passing probability of the generated states is no larger than $\mu_l \equiv 1 - [1-\lambda_2(\Omega^\rightarrow_{\opn{opt}})]\epsilon$, where $\lambda_2(\Omega^\rightarrow_{\opn{opt}})=\cos^2\theta/(1+\cos^2\theta)$, by Lemma 4 in Supplementary Note 3.B. Again, the independence assumption and the law of large numbers guarantee that $m_{\text{pass}}\leq N\mu_l$, when $N$ is large enough. Therefore, we consider two regions regarding the value of $m_{\text{pass}}$ in the adaptive strategy, i.e.,~the region $m_{\text{pass}}\leq N\mu_s$ and the region $m_{\text{pass}}\geq N\mu_l$. In these regions, we can conclude with $\delta_\text{B1}=1-\delta_l$/$\delta_\text{B2}=1-\delta_s$ that the device belongs to \textbf{Case 1}/\textbf{Case 2}. The expressions for $\delta_l$ and $\delta_s$ and all the details for applying adaptive strategy to \textbf{Task B} can be found in Supplementary Note 3.B.

\vspace*{0.5cm}
\noindent\sectionuser{Experimental setup and verification procedure}\quad
Our two-qubit entangled state is generated based on a type-II spontaneous parametric down-conversion in a 20 mm-long periodically-poled potassium titanyl phosphate (PPKTP) crystal, embedded in a Sagnac interferometer~\cite{Kim2006PRA.73.012316,Fedrizzi2007OE.15.15377} (see Fig.~2). A continuous-wave external-cavity ultraviolet (UV) diode laser at 405 nm is used as the pump light. A half-wave plate (HWP1) and quarter-wave plate (QWP1) transform the linear polarized light into the appropriate elliptically polarized light to provide the power balance and phase control of the pump field. With an input pump power of $\sim$30 mW, we typically obtain 120 kHz coincidence counts.

\begin{figure}[!htbp]
  \centering
  	\includegraphics[width=0.5\textwidth]{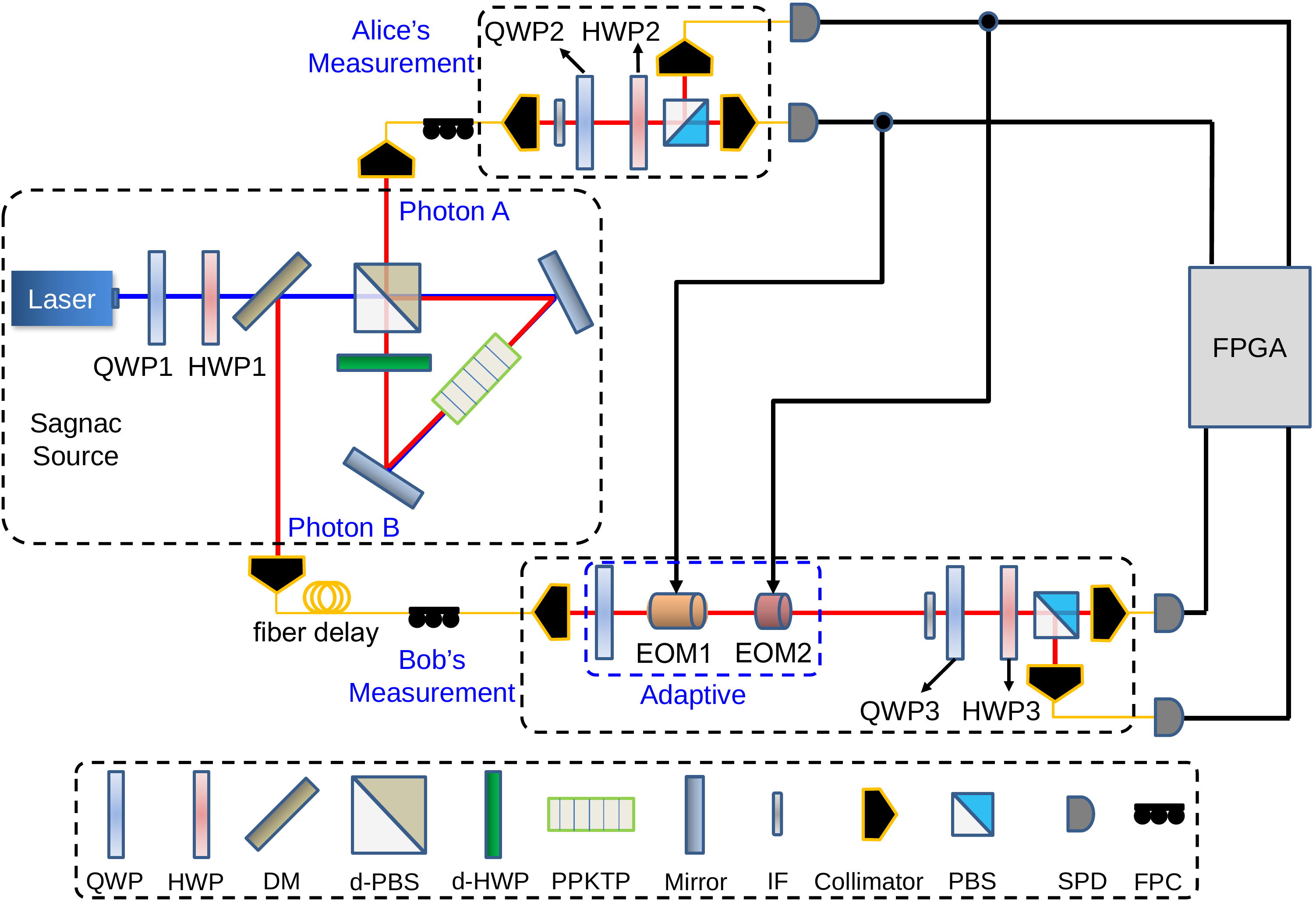}
  \caption{Experimental setup for optimal verification of two-qubit quantum state. We use a photon pair source based on a Sagnac interferometer to generate various two-qubit quantum state. QWP1 and HWP1 are used for adjusting the relative amplitude of the two counter-propagating pump light. For nonadaptive strategy, the measurement is realized with QWP, HWP and polarizing beam splitter (PBS) at both Alice's and Bob's site. The adaptive measurement is implemented by real-time feed-forward operation of electro-optic modulators (EOMs), which are triggered by the detection signals recorded with a field-programmable gate array (FPGA). The optical fiber delay is used to compensate the electronic delay from Alice's single photon detector (SPD) to the two EOMs. QWP: quarter-wave plate; HWP: half-wave plate; DM: Dichroic mirror; PBS: polarizing beam splitter; IF: 3-nm interference filter centered at 810 nm; dPBS: dual-wavelength polarizing beam splitter; dHWP: dual-wavelength half-wave plate; PPKTP: periodically poled KTiOPO$_4$; FPC: Fiber polarization controller.}
  \label{fig2:Sagnac}
\end{figure}

The target state has the following form
\begin{equation}\label{eq:target}
	\ket{\psi} = \sin\theta\ket{HV} + e^{i\phi}\cos\theta\ket{VH},
\end{equation}
where $\theta$ and $\phi$ represent amplitude and phase, respectively. This state is locally equivalent to $\ket{\Psi}$ in Eq.~\eqref{eq:verify-state} by $ \mathbb{U} = \begin{pmatrix} 1 & 0 \\ 0 & 1 \end{pmatrix} \otimes \begin{pmatrix} 0 &  e^{i\phi} \\ 1 & 0 \end{pmatrix}$. By using Lemma 1 in Supplementary Note 1, the optimal strategy for verifying $\ket{\psi}$ is $\Omega_{\opn{opt}}'$=$\mathbb{U}\Omega_{\opn{opt}}\mathbb{U}^\dagger$, where $\Omega_{\opn{opt}}$ is the optimal strategy verifying $\ket{\Psi}$ in Eq.~\eqref{eq:verify-state}. In the Supplementary Note 2, we write down explicitly the optimal nonadaptive strategy~\cite{PhysRevLett.120.170502} and adaptive strategy~\cite{wang2019optimal,yu2019optimal,li2019efficient} for verifying $\ket{\psi}$.

In our experiment, we implement both the nonadaptive and adaptive measurements to realize the verification strategies. There are four settings $\{P_0, P_1, P_2, P_3\}$ for nonadaptive measurements~\cite{PhysRevLett.120.170502} while only three settings $\{\widetilde{T}_0, \widetilde{T}_1,
\widetilde{T}_2\}$ are required for the adaptive
measurements.~\cite{wang2019optimal,yu2019optimal,li2019efficient} The exact form of these projectors is given in the
Supplementary Note 2. Note that the measurements $P_0=\widetilde{T}_0=\proj{H}\otimes\proj{V}+\proj{V}\otimes\proj{H}$
are determined by the standard $\sigma_z$ basis for both the nonadaptive and adaptive strategies, which are orthogonal and can be
realized with a combination of QWP, HWP and polarization beam splitter (PBS). For adaptive measurements, the measurement
bases $\widetilde{v}_{+} = e^{i\phi}\cos\theta\ket{H}+\sin\theta\ket{V}$ / $\widetilde{w}_{+} = e^{i\phi}\cos\theta\ket{H}-i\sin\theta\ket{V}$ and
$\widetilde{v}_{-} = e^{i\phi}\cos\theta\ket{H}-\sin\theta\ket{V}$ / $\widetilde{w}_{-} = e^{i\phi}\cos\theta\ket{H}+i\sin\theta\ket{V}$ at Bob's site are not orthogonal. Note that we only implement the one-way adaptive strategy in our experiment. The two-way adaptive strategy is also derived in ref.~\oncite{wang2019optimal}. Compared to nonadaptive and one-way adaptive strategy, the two-way adaptive strategy gives improvements on the verification efficiency due to the utilization of more classical communication resources. The implementation of two-way adaptive strategy requires: First, Alice performs her measurement and sends her results to Bob; Then, Bob performs his measurement according to Alice's outcomes; Finally, Alice performs another measurement conditioning on Bob's measurement outcomes. This procedure requires the real-time communications both from Alice to Bob and from Bob to Alice. Besides, the two-way adaptive strategy requires the quantum nondemolition measurement at Alice's site, which is difficult to implement in the current setup. To realize the one-way adaptive strategy, we transmit the results of Alice's measurements to Bob through classical communication channel, which is implemented by real-time feed-forward operations of the electro-optic modulators (EOMs). As shown in
Fig.~2, we trigger two EOMs at Bob's site to realize the adaptive measurements based on the results
of Alice's measurement. If Alice's outcome is $\ket{+}=(\ket{V}+\ket{H})/\sqrt{2}$ or
$\ket{R}=(\ket{V}+i\ket{H})/\sqrt{2}$, EOM1 implement the required rotation and EOM2 is identity operation. Conversely, if Alice's
outcome is $\ket{-}=(\ket{V}-\ket{H})/\sqrt{2}$ or $\ket{L}=(\ket{V}-i\ket{H})/\sqrt{2}$, EOM2 will implement the required rotation and
EOM1 is identity operation. Our verification procedure is the following.

(1) Specifications of quantum device. We adjust the HWP1 and QWP1 of our Sagnac source to generate the desired quantum state.

(2) Verification using the optimal strategy. In this stage, we generate many copies of the quantum state sequentially with our Sagnac source. These copies are termed as fake states $\{\sigma_i, i=1,2,\cdots,N\}$. Then, we perform the optimal nonadaptive verification strategy to $\sigma_i$. From the parameters $\theta$ and $\phi$ of target state, we can compute the angles of wave plates QWP2 and HWP2, QWP3 and HWP3 for realizing the projectors $\{P_0, P_1, P_2, P_3\}$ required in the nonadaptive strategy. To implement the adaptive strategy, we employ two EOMs to realize the $\widetilde{v}_{+}$/$\widetilde{v}_{-}$ and $\widetilde{w}_{+}$/$\widetilde{w}_{-}$ measurements once receiving Alice's results (refer to Supplementary Note 2.B for the details). Finally, we obtain the timetag data of the photon detection from the field programmable gate array (FPGA) and extract individual coincidence count (CC) which is regarded as one copy of our target state. We use the timetag experimental technique to record the channel and arrival time of each detected photon for data processing.~\cite{timetagmanual} The time is stored as multiples of the internal time resolution ($\sim$156 ps). The first data in the timetag is recorded as the starting time $t_{i0}$. With the increasing of time, we search the required CC between different channels within a fixed coincidence window (0.4 ns). If a single CC is obtained, we record the time of the ended timetag data as $t_{f0}$. Then we move to the next time slice $t_{i1}$--$t_{f1}$ to search for the next CC. This process can be cycled until we find the $N$-th CC in time slice $t_{iN-1}$--$t_{fN-1}$. This measurement can be viewed as single-shot measurement of the bipartite state with post-selection. The time interval in each slice is about 100 $\mu$s in our experiment, consistent with the 1/CR, CR-coincidence rate. By doing so, we can precisely obtain the number of copies $N$ satisfying the verification requirements. We believe this procedure is suitable in the context of verification protocol, because one wants to verify the quantum state with the minimum amount of copies.

(3) Data processing. From the measured timetag data, the results for different measurement settings can be obtained.
For the nonadaptive strategy, $\{P_0, P_1, P_2, P_3\}$ are chosen randomly with the probabilities \{$\mu_0$, $\mu_1$, $\mu_2$, $\mu_3$\}
($\mu_0$=$\alpha(\theta)$, $\mu_i$=(1-$\alpha(\theta)$)/3)) with $\alpha(\theta)=(2-\sin(2\theta))/(4+\sin(2\theta))$.
For the adaptive strategy, \{$\widetilde{T}_0$, $\widetilde{T}_1$, $\widetilde{T}_2$\} projectors are randomly chosen according to the
probabilities \{${\beta(\theta), (1-\beta(\theta))/2, (1-\beta(\theta))/2}$\}, where $\beta(\theta)=\cos^2\theta/(1+\cos^2\theta)$. For \textbf{Task A}, we use CC to decide whether the outcome of each measurement is pass or fail for each $\sigma_i$. The passing
probabilities for the nonadaptive strategy can be, respectively, expressed as,
\begin{eqnarray}
	P_0: \frac{CC_{HV}+CC_{VH}}{CC_{HH}+CC_{HV}+CC_{VH}+CC_{VV}}, \\
	P_i: \frac{CC_{\widetilde{u}_i \widetilde{v}_i^\perp}+CC_{\widetilde{u}_i^\perp \widetilde{v}_i}+CC_{\widetilde{u}_i^\perp \widetilde{v}_i^\perp}}{CC_{\widetilde{u}_i \widetilde{v}_i}+CC_{\widetilde{u}_i \widetilde{v}_i^\perp}+CC_{\widetilde{u}_i^\perp \widetilde{v}_i}+CC_{\widetilde{u}_i^\perp \widetilde{v}_i^\perp}}.
\end{eqnarray}
where $i=1,2,3$, and $\widetilde{u}_i/\widetilde{u}_i^\perp$ and $\widetilde{v}_i/\widetilde{v}_i^\perp$ are the
orthogonal bases for each photon and their expressions are given in the Supplementary Note 2.A.
For $P_0$, if the individual CC is in $CC_{HV}$ or $CC_{VH}$, it indicates that $\sigma_i$ passes the test and we
set $X_i=1$; otherwise, it fails to pass the test and we set $X_i=0$. For $P_i$, $i=1,2,3$, if the individual CC is in
$CC_{\widetilde{u}_i\widetilde{v}_i^\perp}$, $CC_{\widetilde{u}_i^\perp \widetilde{v}_i}$ or $CC_{\widetilde{u}_i^\perp
\widetilde{v}_i^\perp}$, it indicates that $\sigma_i$ passes the test and we set $X_i=1$;
otherwise, it fails to pass the test and we set $X_i=0$. For the adaptive strategy, we set the value of the random variables $X_i$ in a similar way.

We increase the number of copies ($N$) to decide the occurrence of the first failure for \textbf{Task
A} and the frequency of passing events for \textbf{Task B}. From these data, we obtain the relationship of the confidence parameter $\delta$, the infidelity parameter $\epsilon$, and the number of copies $N$. There are certain probabilities that the verifier fail for each
measurement. In the worst case, the probability that the verifier fails to assert $\sigma_i$ is given by
$1-\Delta_\epsilon$, where $\Delta_\epsilon=1-\epsilon/(2+\sin\theta\cos\theta)$ for nonadaptive
strategy~\cite{PhysRevLett.120.170502} and $\Delta_\epsilon=1-\epsilon/(2-\sin^2\theta)$ for adaptive strategy.~\cite{wang2019optimal,yu2019optimal,li2019efficient}

\vspace*{0.5cm}
\noindent\sectionuser{Results and analysis of two-qubit optimal verification}\quad
The target state to be verified is the general two-qubit state in Eq.~\eqref{eq:target}, where the parameter $\theta=k*\pi/10$ and $\phi$ is optimized with maximum likelihood estimation method. In this section, we present the results of $k=2$ state (termed as k2, see Supplementary Note 2) as an example. The verification results of other states, such as the maximally entangled state and the product state, are presented in Supplementary Note 4. Our theoretical non-maximally target state is specified by $\theta=0.6283$ ($k=2$). In experiment, we obtain $\ket{\psi}=0.5987\ket{HV}+0.8010e^{3.2034i}\ket{VH}$ ($\theta=0.6419$, $\phi=3.2034$) as our target state to be verified. In order to realize the verification strategy, the projective measurement is performed sequentially by randomly choosing the projectors. We take 10000 rounds for a fixed 6000 number of copies.

\textbf{Task A}. According to this verification task, we make a statistical analysis on the number of measurements required for the first occurrence of failure. According to the geometric distribution, the probability that the $n$-th measurement (out of $n$ measurements) is the first failure is
\begin{equation}\label{eq:GeoDis}
	\text{Pr}(N_\text{first}=n)=(1-\Delta_\epsilon)^{n-1}{\cdot}\Delta_\epsilon
\end{equation}
where $n=1,2,3,\cdots$. We then obtain the cumulative probability
\begin{equation}\label{eq:deltaA}
	\delta_\text{A}=\sum_{N_\text{first}=1}^{n_\text{exp}}\text{Pr}(N_\text{first})
\end{equation}
which is the confidence of the device generating the target state $\ket{\psi}$. In Fig.~3a, we show the distribution of the number $N_\text{first}$ required before the first failure for the nonadaptive (Non) strategy. From the figure we can see that $N_\text{first}$ obeys the geometric distribution. We fit the distribution with the function in Eq.~\eqref{eq:GeoDis} and obtain an experimental infidelity $\epsilon^\text{Non}_\text{exp}=0.0034(15)$, which is a quantitative estimation of the infidelity for the generated state. From the experimental statistics, we obtain the number $n_\text{exp}^\text{Non}$=3283 required to achieve the 99\% confidence (i.e.,~99\% cumulative probability for $N_\text{first}\leq n_\text{exp}^\text{Non}$) of judging the generated states to be the target state in the nonadaptive strategy.

\begin{figure}[!htbp]
  \centering
  	\includegraphics[width=0.6\textwidth]{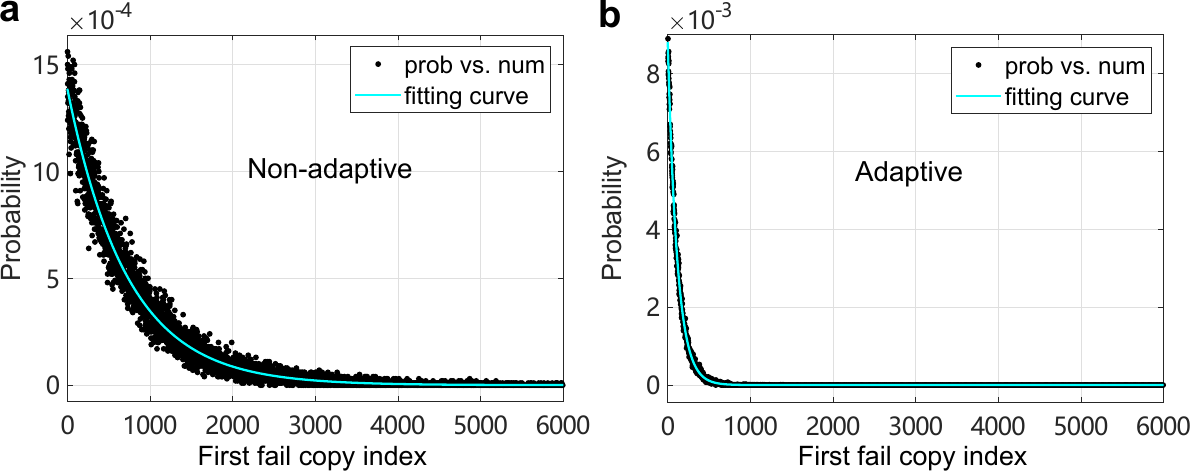}
  \caption{The distribution of the number required before the first failure. \textbf{a} for the nonadaptive strategy. \textbf{b} for the adaptive strategy. From the statistics, we obtain the fitting infidelity of $\epsilon^\text{Non}_\text{exp}=0.0034(15)$ and $\epsilon^\text{Adp}_\text{exp}=0.0121(6)$. The numbers required to achieve a 99\% confidence are $n_\text{exp}^\text{Non}$=3283 and $n_\text{exp}^\text{Adp}$=536, respectively.}
  \label{fig3:FirstFail}
\end{figure}

The results for the adaptive (Adp) verification of \textbf{Task A} are shown in Fig.~3b. The experimental fitting infidelity for this distribution is $\epsilon^\text{Adp}_\text{exp}=0.0121(6)$. The number required to achieve the same 99\% confidence as the nonadaptive strategy is $n_\text{exp}^\text{Adp}$=536. Note this nearly six times (i.e.,~$n_\text{exp}^\text{Non}/n_\text{exp}^\text{Adp}\sim6$) difference of the experimental number required to obtain the 99\% confidence is partially because the infidelity with adaptive strategy is approximately four times larger than the nonadaptive strategy. However, the number of copies required to achieve the same confidence by using the adaptive strategy is still about two times fewer than the nonadaptive strategy even if the infidelity of the generated states is the same, see the analysis presented in Supplementary Note 5. This indicates that the adaptive strategy requires a significant lower number of copies to conclude the device output state $\ket{\psi}$ with 99\% confidence compared with the nonadaptive one.

\begin{figure}[!htbp]
  \centering
  	\includegraphics[width=0.6\textwidth]{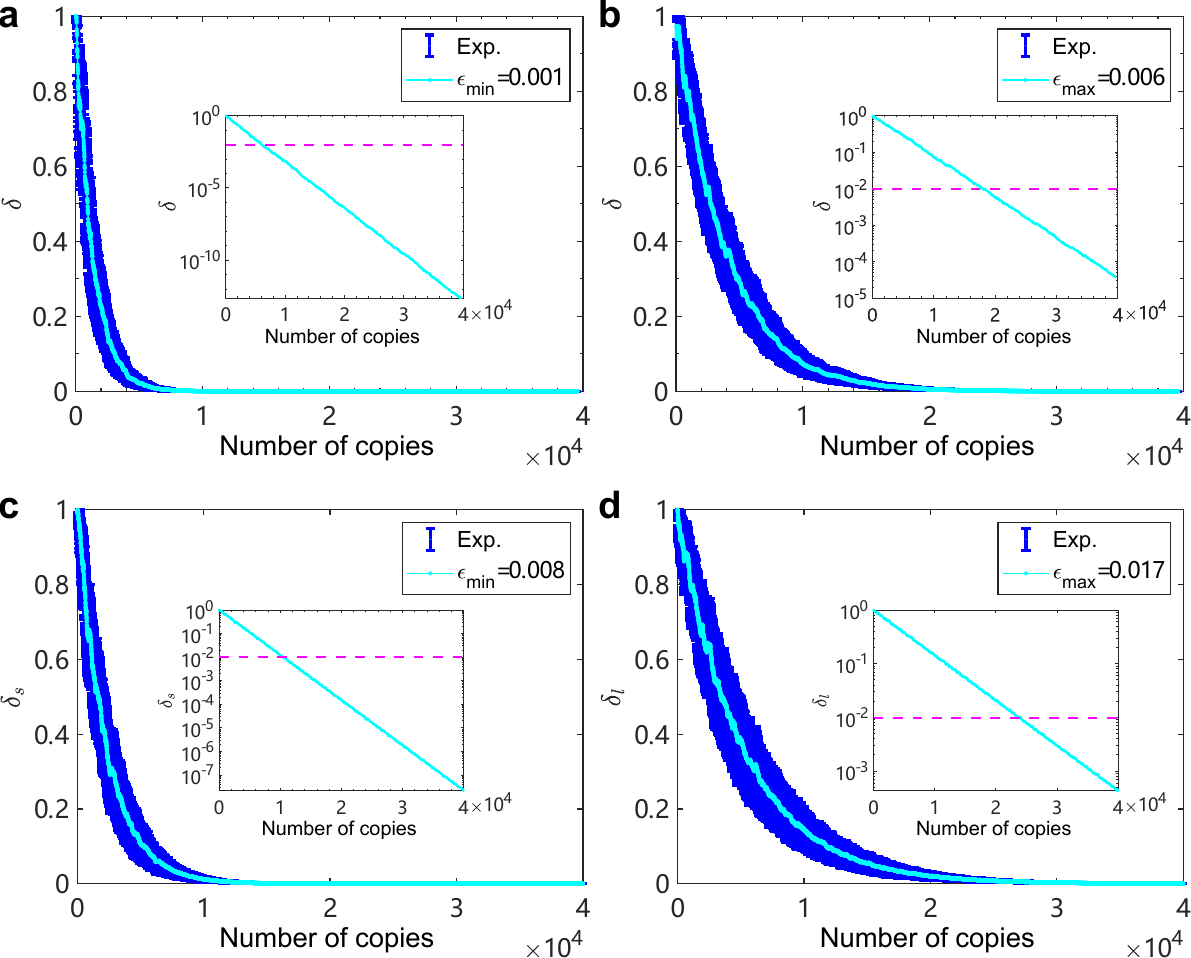}
  \caption{Experimental results for the verification of Task B. \textbf{a,b}~Nonadaptive strategy. The confidence parameter $\delta$ decreases with the increase of number of copies. After about 6000 copies, $\delta$ goes below 0.01 for \textbf{Case 2} (see inset of \textbf{a}). For \textbf{Case 1} (see inset of \textbf{b}), it takes about 17905 copies to reduce $\delta$ below 0.01. \textbf{c,d}~Adaptive strategy. The number of copies required to reduce $\delta_s$ and $\delta_l$ to be 0.01 for the two cases are about 10429 and 23645, respectively. Generally, it takes less number of copies for verifying \textbf{Case 2} because more space are allowed for the states to be found in the 0-$\mu N$ region. The blue is the experimental error bar (Exp.), which is obtained by 100 rounds of measurements for each coincidence. The insets show the log-scale plots, which indicates $\delta$ can reach a value below 0.01 with about thousands to tens of thousands of copies.}
  \label{fig4:DeltaN}
\end{figure}
\textbf{Task B}. We emphasize that Task B is considered under the assumption that the quantum device is either in Case 1 or in Case 2 as described above. These two cases are complementary and the confidence to assert whether the device belongs to Case 1 or Case 2 can be obtained according to different  values of $m_{\text{pass}}$. We refer to the Supplementary Note 3 for detailed information on judging the quantum device for these two cases. For each case, we can reduce the parameter $\delta$ by increasing the number of copies of the quantum state. Thus, the confidence $\delta_\text{B}=1-\delta$ to judge the device belongs to \textbf{Case 1}/\textbf{Case 2} is obtained. For the nonadaptive strategy, the passing probability $m_{\text{pass}}/N$ can finally reach a stable value 0.9986$\pm$0.0002 after about 1000 number of copies (see Supplementary Note 6). This value is smaller than the desired passing probability $\mu$ when we choose the infidelity $\epsilon_\text{min}$ to be 0.001. In this situation, we conclude the state belongs to \textbf{Case 2}. Conversely, the stable value is larger than the desired passing probability $\mu$ when we choose the infidelity $\epsilon_\text{max}$ to be 0.006. In this situation, we conclude the state belongs to \textbf{Case 1}. In Fig.~4, we present the results for the verification of \textbf{Task B}. First, we show the the confidence parameter $\delta$ versus the number of copies for the nonadaptive strategy in Fig.~4a, b. With about 6000 copies of quantum state, the $\delta$ parameter reaches 0.01 for \textbf{Case 2}. This indicates that the device belongs to \textbf{Case 1} with probability at most 0.01. In other words, there are at least 99\% confidence that we can say the device is in `bad' case after about 6000 measurements. Generally, more copies of quantum states are required to reach a same level $\delta$=0.01 for \textbf{Case 1}, because there are fewer portion for the number of passing events $m_{\text{pass}}$ to be chosen in the range of ${\mu}N$ to $N$. From Fig.~4b, we can see that it takes about 17905 copies of quantum state in order to reduce the parameter $\delta$ to be below 0.01. At this stage, we can say that the device belongs to \textbf{Case 2} with probability at most 0.01. That is, there are at least 99\% confidence that we can say the device is in `good' case after about 17905 measurements.

Figure~4c, d are the results of adaptive strategy. For the adaptive strategy, the passing probability $m_{\text{pass}}/N$ finally reaches a stable value 0.9914$\pm$0.0005 (see Supplementary Note 6), which is smaller than the nonadaptive measurement due to the limited fidelity of the EOMs' modulation. Correspondingly, the infidelity parameter for the two cases are chosen to be $\epsilon_\text{min}=0.008$ and $\epsilon_\text{max}=0.017$, respectively. We can see from the figure that it takes about 10429 number of copies for $\delta_s$ to be decreased to 0.01 when choosing $\epsilon_\text{min}$, which indicates that the device belongs to \textbf{Case 2} with at least 99\% confidence after about 10429 measurements. On the other hand, about 23645 number of copies are needed for $\delta_l$ to be decreased to 0.01 when choosing $\epsilon_\text{max}$, which indicates that the device belongs to \textbf{Case 1} with at least 99\% confidence after about 23645 measurements. Note that the difference of adaptive and nonadaptive comes from the different descent speed of $\delta$ versus the number of copies $N$, which results from the differences in passing probabilities and the infidelity parameters. See Supplementary Note 6 for detailed explanations.

From another perspective, we can fix $\delta$ and see how the parameter $\epsilon$ changes when increasing the number of copies. Figure~5 presents the variation of $\epsilon$ versus the number of copies in the log-log scale when we set the $\delta$ to be 0.10. At small number of copies, the infidelity is large and drops fast to a low level when the number of copies increases to be approximately 100. The decline becomes slow when the number of copies exceeds 100. It should be noted that the $\epsilon$ asymptotically tends to a value of 0.0036 (calculated by $1-\Delta_\epsilon=0.9986$) and 0.012 (calculated by $1-\Delta_\epsilon=0.9914$) for the nonadaptive and adaptive strategies, respectively. Therefore, we are still in the region of $m_{\text{pass}}/N{\geq}\mu$. We can also see that the scaling of $\epsilon$ versus $N$ is linear in the small number of copies region. We fit the data in the linear region with $\epsilon{\sim}N^{r}$ and obtain a slope $r${}$\sim$-0.88$\pm$0.03 for nonadaptive strategy and $r${}$\sim$-0.78$\pm$0.07 for adaptive strategy. This scaling exceeds the standard quantum limit $\epsilon{\sim}N^{-0.5}$ scaling~\cite{zhang2019experimental,giovannetti2011advances} for physical parameter estimation. Thus, our method is better for estimating the infidelity parameter $\epsilon$ than the classical metrology. Note that $m_{\text{pass}}/N$ is a good estimation for our state fidelity. If the state fidelity increases, the slope of linear region will decreases to the Heisenberg limit $\epsilon{\sim}N^{-1}$ in quantum metrology (see Supplementary Note 6).
\begin{figure}[!htbp]
  \centering
  	\includegraphics[width=0.48\textwidth]{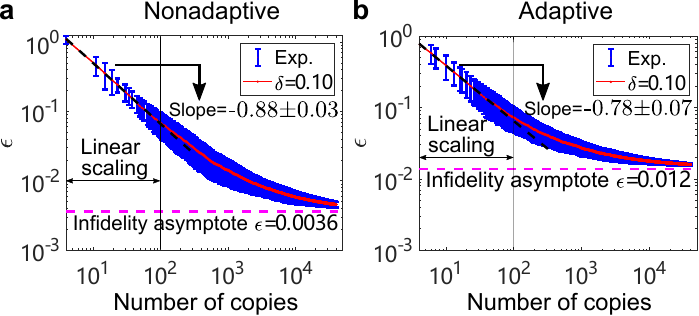}
  \caption{The variation of infidelity parameter versus the number of copies. \textbf{a}~Nonadaptive strategy and \textbf{b} Adaptive strategy. Here the data is plotted on a log-log scale. The confidence parameter $\delta$ is chosen to be 0.10. The parameter $\epsilon$ fast decays to a low value which is asymptotically close to the infidelity 0.0036 (Nonadaptive) and 0.012 (Adaptive) of the generated quantum state when increasing the number of copies. The fitting slopes for the linear scaling region are -0.88$\pm$0.03 and -0.78$\pm$0.07 for the nonadaptive and adaptive, respectively. The blue symbol is the experimental data with error bar (Exp.), which is obtained by 100 rounds of measurements for each coincidence.}
  \label{fig5:EpsilonN}
\end{figure}

\vspace*{0.5cm}
\noindent\sectionuser{Comparison with standard quantum state tomography}\quad
The advantage of the optimal verification strategy lies in that it requires fewer number of measurement settings, and more importantly, the number of copies to estimate the quantum states generated by a quantum device. In standard quantum state tomography,~\cite{Altepeter2005105} the minimum number of settings required for a complete reconstruction of the density matrix is $3^n$, where $n$ is the number of qubits. For two-qubit system, the standard tomography will cost nine settings whereas the present verification strategy only needs four and three measurement settings for the nonadaptive and adaptive strategies, respectively. To quantitatively compare the verification strategy with the standard tomography, we show the scaling of the parameters $\delta$ and $\epsilon$ versus the number of copies $N$ in Fig.~6. For each number of copies, the fidelity estimation $F\pm\Delta F$ can be obtained by the standard quantum state tomography. The $\delta$ of standard tomography is calculated by the confidence assuming normal distribution of the fidelity with mean $F$ and standard deviation $\Delta F$. The $\epsilon$ of standard tomography is calculated by $\epsilon=1-F$. The result of verification strategy is taken from the data in Figs.~4 and~5 for the nonadaptive strategy. For $\delta$ versus $N$, we fit the curve with equation $\delta=e^{g\cdot N}$, where $g$ is the scaling of $\log(\delta)$ with $N$. We obtain $g_\text{tomo}=-6.84\times 10^{-5}$ for the standard tomography and $g_\text{verif}=-7.35\times 10^{-4}$ for the verification strategy. This indicates that present verification strategy achieves better confidence than standard quantum state tomography given the same number of copies. For $\epsilon$ versus $N$, as shown in Fig.~6b, the standard tomography will finally reach a saturation value when increasing the number of copies. With the same number of copies $N$, the verification strategy obtains a smaller $\epsilon$, which indicates that the verification strategy can give a better estimation for the state fidelity than the standard quantum state tomography when small number of quantum states are available for a quantum device.
\begin{figure}[!htbp]
  \centering
  	\includegraphics[width=0.6\textwidth]{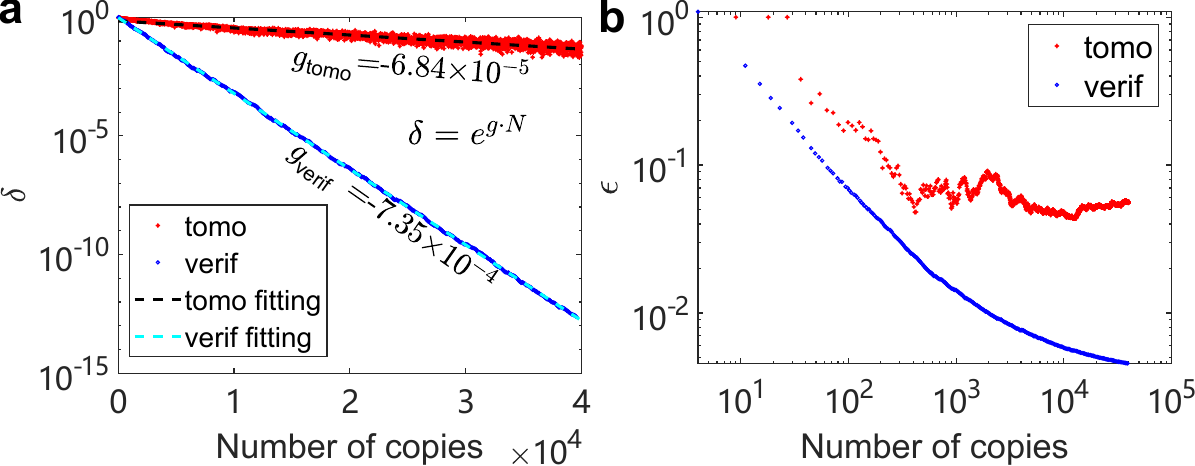}
  \caption{Comparison of standard quantum state tomography and present verification strategy. In the figure, we give the variation of \textbf{a}~$\delta$ and \textbf{b}~$\epsilon$ versus the number of copies $N$ by using standard quantum state tomography (tomo) and present verification strategy (verif). For standard tomography, the fidelity $F\pm\Delta F$ is first obtained from the reconstructed density matrix of each copy $N$. Then confidence parameter $\delta$ is estimated by assuming normal distribution of the fidelity with mean $F$ and standard deviation $\Delta F$. The infidelity parameter $\epsilon$ is estimated by $\epsilon=1-F$. Note that the experimental data symbols shown in \textbf{a} looks like lines due to the dense data points.}
  \label{fig6:Scaling}
\end{figure}

\vspace*{0.5cm}
\noindent{\secfont{DISCUSSION}}
\vspace*{0.1cm}

\noindent Our work, including experiment, data processing and analysis framework, can be used as a standardized procedure for verifying quantum states. In Task A, we give an estimation of the infidelity parameter $\epsilon_\text{exp}$ of the generated states and the confidence $\delta_\text{A}$ to produce the target quantum state by detecting certain number of copies. With the $\epsilon_\text{exp}$ obtained from Task A, we can choose $\epsilon_\text{max}$ or $\epsilon_\text{min}$ which divides our device to be Case 1 or Case 2. Task B is performed based on the chosen $\epsilon_\text{min}$ and $\epsilon_\text{max}$. We can have an estimation for the scaling of the confidence parameter $\delta$ versus the number of copies $N$ based on the analysis method of Task B. With a chosen $\delta$, we can also have an estimation for the scaling of the infidelity parameter $\epsilon$ versus $N$. With these steps, we can have a comprehensive judgement about how well our device really works.

In summary, we report experimental demonstrations for the optimal two-qubit pure state verification strategy with and without adaptive measurements. We give a clear discrimination and comprehensive analysis for the quantum states generated by a quantum device. Two tasks are proposed for practical applications of the verification strategy. The variation of confidence and infidelity parameter with the number of copies for the generated quantum states are presented. The obtained experimental results are in good agreement with the theoretical predictions. Furthermore, our experimental framework offers a precise estimation on the reliability and stability of quantum devices. This ability enables our framework to serve as a standard tool for analysing quantum devices. Our experimental framework can also be extended to other platforms.

\vspace*{1.0cm}
\noindent{\secfont{DATA AVAILABILITY}}
\vspace*{0.2cm}

\noindent The data that support the plots within this paper and other findings of this study are available from the corresponding author upon reasonable request.

\vspace*{1.0cm}
\noindent{\secfont{CODE AVAILABILITY}}
\vspace*{0.2cm}

\noindent The codes that support the plots within this paper and other findings of this study are available from the corresponding author upon reasonable request.

\vspace*{1.0cm}
\noindent{\secfont{ACKNOWLEDGEMENTS}}
\vspace*{0.2cm}

\noindent The authors thank B. Daki\'c for the helpful discussions. This work was supported by the National Key Research and Development Program of China (Nos. 2017YFA0303704 and 2019YFA0308704), the National Natural Science Foundation of China (Nos. 11674170 and 11690032), NSFC-BRICS (No. 61961146001), the Natural Science Foundation of Jiangsu Province (No. BK20170010), the Leading-edge technology Program of Jiangsu Natural Science Foundation (No. BK20192001), the program for Innovative Talents and Entrepreneur in Jiangsu, and the Fundamental Research Funds for the Central Universities.

\vspace*{1.0cm}
\noindent{\secfont{COMPETING INTERESTS}}
\vspace*{0.2cm}

\noindent A patent application related to this work is filed by Nanjing University on 29 May, 2020 in China. The application number is 202010475173.4 (Patent in China). The status of the application is now under patent pending.

\vspace*{1.0cm}
\noindent{\secfont{AUTHOR CONTRIBUTIONS}}
\vspace*{0.2cm}

\noindent X.-H.~J., K.-Y.~Q., Z.-Z.~C., Z.-Y.~C.~and X.-S.~M.~designed and performed the experiment. K.~W.~performed the theoretical analysis. X.-H.~J., K.-Y.~Q.~analysed the data. X.-H.~J., K.~W.~and X.-S.~M.~wrote the paper with input from all authors. All authors discussed the results and read the manuscript. F.-M.~S., S.-N.~Z.~and X.-S.~M.~supervised the work. X.-H.~J., K.~W., K.-Y.~Q., Z.-Z.~C. contributed equally to this work.

\vspace*{1.0cm}
\noindent{\secfont{REFERENCES}}
\clearpage
\setcounter{equation}{0}
\setcounter{figure}{0}
\setcounter{table}{0}
\makeatletter
\renewcommand{\figurename}{{\bfseries Supplementary Figure}}
\renewcommand{\thefigure}{{\bfseries \arabic{figure}}}
\renewcommand{\tablename}{{\bfseries Supplementary Table}}
\renewcommand{\thetable}{{\bfseries \arabic{table}}}
\renewcommand{\thesection}{Supplementary Note \arabic{section}}
\makeatother

\setcounter{page}{1}
\begin{center}
    \textbf{\LARGE Supplementary Information \---
Towards the standardization of quantum state verification using optimal strategies}
\end{center}
\bigskip

\section{The theory of optimal state verification}\label{sec1}

\noindent First we argue that it suffices to derive optimal verification strategies for the state $\ket{\Psi}$ defined in Eq.~(1) of
the main text. This is because locally unitarily equivalent pure states have locally unitarily equivalent optimal
verification strategies. This fact is proved in ref.~\cite[Lemma 2]{SPhysRevLett.120.170502} and we restate here for
completeness.
\begin{lemma}[Lemma 2 in the Supplemental Material of ref.~\cite{SPhysRevLett.120.170502}]\label{lemma:2}
Given any two-qubit state $\ket{\psi}$ with optimal strategy $\Omega$, a locally unitarily equivalent state $(U\otimes
V)\ket{\psi}$, where $U$ and $V$ are unitaries, has optimal strategy $(U\otimes V)\Omega(U\otimes V)^\dagger$.
\end{lemma}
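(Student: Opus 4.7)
The plan is to verify that conjugation by the local unitary $U \otimes V$ sets up a bijection between valid verification strategies for $\ket{\psi}$ and those for $(U\otimes V)\ket{\psi}$, and then observe that this bijection preserves the quantity $\lambda_2(\cdot)$ that defines optimality. Concretely, I would proceed in three short steps.

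First, I would check the passing condition. Writing $\ket{\psi'} = (U\otimes V)\ket{\psi}$ and $\Omega' = (U\otimes V)\Omega(U\otimes V)^\dagger$, the identity $\Omega\ket{\psi} = \ket{\psi}$ (which holds because $\Omega$ is a valid strategy for $\ket{\psi}$) gives $\Omega'\ket{\psi'} = (U\otimes V)\Omega\ket{\psi} = (U\otimes V)\ket{\psi} = \ket{\psi'}$, so $\Omega'$ passes $\ket{\psi'}$ with certainty. Second, I would verify that the decomposition $\Omega = \sum_l p_l M_l$ into binary local measurement operators $M_l$ transforms into a decomposition $\Omega' = \sum_l p_l (U\otimes V)M_l(U\otimes V)^\dagger$ whose summands remain local binary measurement operators with the same probabilities; this is immediate because $U\otimes V$ is a product unitary, so each $(U\otimes V)M_l(U\otimes V)^\dagger$ acts as the conjugation of a local projector by a local unitary, which is again a local projector that fixes $\ket{\psi'}$. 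Hence $\Omega'$ is an admissible strategy for $\ket{\psi'}$ in exactly the same measurement class as $\Omega$ was for $\ket{\psi}$ (nonadaptive local, one-way LOCC, or two-way LOCC as the case may be).

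Third, I would observe that spectra are invariant under unitary conjugation, so $\lambda_2(\Omega') = \lambda_2(\Omega)$ by Eq.~(2) of the main text; by Eq.~(3), this means $\Omega'$ achieves the same efficiency for verifying $\ket{\psi'}$ as $\Omega$ does for $\ket{\psi}$. Optimality then follows by a symmetric contradiction argument: if some admissible $\widetilde{\Omega}'$ for $\ket{\psi'}$ satisfied $\lambda_2(\widetilde{\Omega}') < \lambda_2(\Omega')$, then applying the inverse conjugation $\widetilde{\Omega} := (U^\dagger\otimes V^\dagger)\widetilde{\Omega}'(U\otimes V)$ would, by the same three observations above applied in reverse, yield an admissible strategy for $\ket{\psi}$ with $\lambda_2(\widetilde{\Omega}) = \lambda_2(\widetilde{\Omega}') < \lambda_2(\Omega)$, contradicting optimality of $\Omega$.

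No step is really an obstacle here: the only subtlety is making sure the class of allowed measurements (local projective, one-way LOCC, etc.) is closed under conjugation by product unitaries, which is clear since a product unitary intertwines local operations on Alice's side with local operations on Alice's side and likewise for Bob, and classical communication is untouched. The proof is essentially a symmetry argument that would fit in a few lines.
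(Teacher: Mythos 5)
Your proof is correct and follows essentially the same route as the cited proof (the paper itself only restates this lemma, deferring to Lemma 2 in the Supplemental Material of Pallister, Linden and Montanaro, whose argument is exactly this conjugation symmetry): conjugation by $U\otimes V$ bijectively maps admissible strategies for $\ket{\psi}$ to admissible strategies for $(U\otimes V)\ket{\psi}$, preserves the passing condition and the measurement class, and leaves the spectrum --- hence $\lambda_2$ --- invariant, so optimality transfers by the inverse-conjugation contradiction you give. No gaps; your explicit check that the admissible class (local projective, one-way LOCC, etc.) is closed under product-unitary conjugation is precisely the point that makes the symmetry argument go through.
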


\subsection{Optimal nonadaptive strategy}\label{sec:nonadaptive}

In this section, we briefly summarize the optimal nonadaptive strategy.
According to ref.~\cite{SPhysRevLett.120.170502}, any optimal strategy for verifying state the state $\ket{\Psi}$ ($\theta\in(0,\pi/4)$) defined in Eq.~(1) of the main text, that accepts $\ket{\Psi}$ with
certainty and satisfies the properties of locality, projectivity, and trusty, can be expressed as a strategy involving
the following four measurements,
\begin{align}\label{eq:nonTheo}
\Omega_{\opn{opt}} 
= \alpha(\theta)P_{ZZ}^+ + \frac{1-\alpha(\theta)}{3} \sum_{k=1}^3\left[\1 - \proj{u_k}\otimes\proj{v_k}\right],
\end{align}
where $\alpha(\theta) = (2 - \sin(2\theta))/(4 + \sin(2\theta))$, $P_{ZZ}^+ = \proj{HH} + \proj{VV}$ is the projector
onto the positive eigenspace of the tensor product of Pauli matrix $Z\otimes Z$, and the states $\{\ket{u_k}\}$ and
$\{\ket{v_k}\}$ are written explicitly in the following:
\begin{align}
\ket{u_1} &= \frac{1}{\sqrt{1+\tan\theta}}\ket{H} + \frac{e^{\frac{2\pi i}{3}}}{\sqrt{1+\cot\theta}}\ket{V}, \\
\ket{v_1} &= \frac{1}{\sqrt{1+\tan\theta}}\ket{H} + \frac{e^{\frac{\pi i}{3}}}{\sqrt{1+\cot\theta}}\ket{V},\\
\ket{u_2} &= \frac{1}{\sqrt{1+\tan\theta}}\ket{H} + \frac{e^{\frac{4\pi i}{3}}}{\sqrt{1+\cot\theta}}\ket{V}, \\
\ket{v_2} &= \frac{1}{\sqrt{1+\tan\theta}}\ket{H} + \frac{e^{\frac{5\pi i}{3}}}{\sqrt{1+\cot\theta}}\ket{V},\\
\ket{u_3} &= \frac{1}{\sqrt{1+\tan\theta}}\ket{H} + \frac{1}{\sqrt{1+\cot\theta}}\ket{V}, \\
\ket{v_3} &= \frac{1}{\sqrt{1+\tan\theta}}\ket{H} - \frac{1}{\sqrt{1+\cot\theta}}\ket{V}.
\end{align}
Correspondingly, the second largest eigenvalue of $\Omega_{\opn{opt}}$ is given by
\begin{align}\label{eq:Omega-2-non}
    \lambda_2(\Omega_{\opn{opt}}) = \frac{2 + \sin(2\theta)}{4 + \sin(2\theta)}. 
\end{align}
Substitute this value into Eq.~(3) of the main text, we find the optimal number of measurements required to verify
$\ket{\Psi}$ with infidelity $\epsilon$ and confidence $1-\delta$ satisfies
\begin{equation}
n_{\opn{opt}} 
\approx \frac{1}{[1-\lambda_2(\Omega_{\opn{opt}})]\epsilon}\ln\frac{1}{\delta}
= \left(2+\sin\theta\cos\theta\right)\frac{1}{\epsilon}\ln\frac{1}{\delta}.
\end{equation}

For analysis, we consider the spectral decomposition of $\Omega_{\opn{opt}}$. This decomposition is helpful for
computing the passing probability of the states being verified. Let $\ket{\Psi^\perp} :=
\cos\theta\ket{HH}-\sin\theta\ket{VV}$. One can check that $\{\ket{\Psi},\ket{\Psi^\perp},\ket{HV},\ket{VH}\}$ forms
an orthonormal basis of a two-qubit space. The spectral decomposition of $\Omega_{\opn{opt}}$ is
\begin{align}\label{eq:Omega-non-sd}
  \Omega_{\opn{opt}} 
= \ket{\Psi}\bra{\Psi}
+ \lambda_2(\Omega_{\opn{opt}})
  \left(\proj{\Psi^\perp}+\proj{VH}+\proj{HV}\right),
\end{align}
where $\lambda_2(\Omega_{\opn{opt}})$ is given in Eq.~\eqref{eq:Omega-2-non}.

\subsection{Optimal adaptive strategy using one-way classical communication}\label{sec:adaptive}

In this section, we briefly summarize the optimal adaptive strategy using one-way classical communication. According to
ref.~\cite{Swang2019optimal} any optimal strategy for verifying the state $\ket{\Psi}$ ($\theta\in(0,\pi/4)$)
defined in Eq.~(1) of the main text, that accepts $\ket{\Psi}$ and can be implemented by local measurements together
with one-way classical communication, can be expressed as a strategy involving the following three measurements,
\begin{align}\label{eq:adaTheo}
\Omega^\rightarrow_{\opn{opt}} 
= \beta(\theta)P_{ZZ}^+ + \frac{1-\beta(\theta)}{2}T_1 + \frac{1-\beta(\theta)}{2}T_2,
\end{align}
where $\beta(\theta)=\cos^2\theta/(1+\cos^2\theta)$ and
\begin{align}
T_1 &= \proj{+}\otimes\proj{v_{+}} + \proj{-}\otimes\proj{v_{-}},  \\
T_2 &= \proj{R}\otimes\proj{w_{+}} + \proj{L}\otimes\proj{w_{-}}, 
\end{align} 
such that
\begin{alignat}{2}
  \ket{+} &= \frac{\ket{V}+\ket{H}}{\sqrt{2}}, &\quad 
  \ket{-} &= \frac{\ket{V}-\ket{H}}{\sqrt{2}},  \\
  \ket{v_+} &= \cos\theta\ket{V}+\sin\theta\ket{H},&\quad 
  \ket{v_-} &= \cos\theta\ket{V}-\sin\theta\ket{H},  \\
  \ket{R} &= \frac{\ket{V}+i\ket{H}}{\sqrt{2}}, &\quad
  \ket{L} &= \frac{\ket{V}-i\ket{H}}{\sqrt{2}},  \\
  \ket{w_+} &= \cos\theta\ket{V}-i\sin\theta\ket{H}, &\quad 
  \ket{w_-} &= \cos\theta\ket{V}+i\sin\theta\ket{H}.
\end{alignat}
Correspondingly, the second largest eigenvalue of $\Omega^\rightarrow_{\opn{opt}}$ is given by
\begin{align}\label{eq:Omega-2-adp}
    \lambda_2(\Omega^\rightarrow_{\opn{opt}}) = \frac{\cos^2\theta}{1+\cos^2\theta}. 
\end{align}
Substitute this value into Eq.~(3) of the main text, we find the optimal number of measurements required to verify
$\ket{\Psi}$ with infidelity $\epsilon$ and confidence $1-\delta$ satisfies
\begin{equation}
n^\rightarrow_{\opn{opt}} 
\approx \frac{1}{[1-\lambda_2(\Omega^\rightarrow_{\opn{opt}})]\epsilon}\ln\frac{1}{\delta}
= (1+\cos^2\theta)\frac{1}{\epsilon}\ln\frac{1}{\delta}.
\end{equation}
The spectral decomposition of $\Omega^\rightarrow_{\opn{opt}}$ is given by
\begin{align}\label{eq:Omega-adp-sd}
  \Omega^\rightarrow_{\opn{opt}} 
= \ket{\Psi}\bra{\Psi}
+ \lambda_2(\Omega^\rightarrow_{\opn{opt}})\left(\proj{\Psi^\perp} + \proj{HV}\right)
+ \lambda_4(\Omega^\rightarrow_{\opn{opt}})\proj{VH},
\end{align}
where $\lambda_2(\Omega^\rightarrow_{\opn{opt}})$ is the second largest eigenvalue given in Eq.~\eqref{eq:Omega-2-adp} and
$\lambda_4(\Omega^\rightarrow_{\opn{opt}})=\sin^2\theta/(1+\cos^2\theta)$ is the fourth (which is also the smallest)
eigenvalue of $\Omega^\rightarrow_{\opn{opt}}$.

\subsection{Optimal verification of Bell state}

As pointed out in ref.~\cite{SPhysRevLett.120.170502}, the above nonadaptive strategy is actually only optimal for
$\theta\in(0,\pi/4)\cup(\pi/4,\pi/2)$. That is to say, it is no longer optimal for verifying the Bell state, for which
$\theta=\pi/4$. In this section, we summary explicitly the optimal strategy for verifying the Bell state of the
following form:
\begin{equation}\label{eq:Bell}
  \ket{\Phi^+} = \frac{1}{\sqrt{2}}\left(\ket{HH} + \ket{VV} \right).
\end{equation}
The first proposal for testing the Bell state $\ket{\Phi^+}$ is given in ref.~\cite{SHayashiJPAMG2006}. According to Section 7.1 of ref.~\cite{SHayashiJPAMG2006} and Eq.~(8) of ref.~\cite{SPhysRevLett.120.170502}, the optimal strategy for verifying $\ket{\Phi^+}$ implementable by locally projective measurements, can be expressed as a strategy involving the following three measurements,
\begin{align}
\Omega_{\opn{opt,Bell}} = \frac{1}{3}\left(P_{XX}^+ + P_{YY}^- + P_{ZZ}^+\right),
\end{align}
where $P_{XX}^+$ is the projector onto the positive eigenspace of the tensor product of Pauli matrix $X\otimes X$ and
$P_{YY}^-$ is the projector onto the negative eigenspace of the tensor product of Pauli matrix $Y\otimes Y$, similarly
defined as that of $P_{ZZ}^+$.
We can check that $\lambda_2(\Omega_{\opn{opt,Bell}})=1/3$
and thus  the optimal number of measurements required to verify
$\ket{\Phi^+}$ with infidelity $\epsilon$ and confidence $1-\delta$ satisfies
\begin{equation}
n_{\opn{opt,Bell}} 
\approx \frac{1}{[1-\lambda_2(\Omega_{\opn{opt,Bell}})]\epsilon}\ln\frac{1}{\delta}
= \frac{3}{2\epsilon}\ln\frac{1}{\delta}.
\end{equation}

\subsection{Optimal verification of a product state}

For the product state $\ket{HV}$ ($\theta=0$ or $\pi/2$), the optimal strategy is provably given by $\Omega_{\opn{opt,pd}} = \proj{HV}$.
Obviously, $\lambda_2(\Omega_{\opn{opt,pd}})=1$ and thus the optimal number of measurements required to verify
$\ket{HV}$ with infidelity $\epsilon$ and confidence $1-\delta$ satisfies
\begin{equation}
n_{\opn{opt,pd}} 
\approx \frac{1}{[1-\lambda_2(\Omega_{\opn{opt,pd}})]\epsilon}\ln\frac{1}{\delta}
= \frac{1}{\epsilon}\ln\frac{1}{\delta}.
\end{equation}


\section{The experiment of optimal state verification}\label{sec4}

\subsection{The verification strategies for experimentally generated state}

\noindent In our experiment setting, the target state has the form
\begin{align}\label{target}
  \ket{\psi(\theta,\phi)} = \sin\theta\ket{HV} + e^{i\phi}\cos\theta\ket{VH},
\end{align}
where $\theta\in[0,\pi/4]$ and $\phi\in[0,2\pi]$, which will be determined by the experimental data. We can see that $\ket{\psi(\theta,\phi)}$ is equivalent to $\ket{\Psi}$ defined in Eq.~(1) of the
main text through the following unitary operator:
\begin{align}
    \mathbb{U} \equiv (\1\otimes\kappa) = \begin{pmatrix} 1 & 0 \\ 0 & 1\end{pmatrix}\otimes
     \begin{pmatrix} 0 & e^{i\phi} \\ 1 & 0 \end{pmatrix}.
\end{align}
This is indeed the case since 
\begin{align}
(\1\otimes\kappa)\ket{\Psi}
= \begin{pmatrix} 0 & e^{i\phi} & 0 & 0 \\ 
                  1 & 0 & 0 & 0 \\
                  0 & 0 & 0 & e^{i\phi} \\
                  0 & 0 & 1 & 0\end{pmatrix}
  \begin{pmatrix} \sin\theta \\ 0 \\ 0 \\ \cos\theta \end{pmatrix}
= \begin{pmatrix} 0 \\ \sin\theta \\ e^{i\phi}\cos\theta \\ 0 \end{pmatrix}
= \ket{\psi(\theta,\phi)}. 
\end{align} 
Then Lemma~\ref{lemma:2} together with the optimal strategies for verifying $\ket{\Psi}$ summarized in the
last section yields the optimal strategies for verifying $\ket{\psi(\theta,\phi)}$.
For completeness, we write down explicitly the measurements of these strategies.

\vspace*{0.1in}
\textbf{Optimal nonadaptive strategy for $\ket{\psi(\theta,\phi)}$.} By Lemma~\ref{lemma:2} and Eq.~\eqref{eq:nonTheo}, the optimal nonadaptive
strategy for verifying $\ket{\psi(\theta,\phi)}$ has the following form,
\begin{align}
\Omega_{\opn{opt}}' = \mathbb{U}\Omega_{\opn{opt}}\mathbb{U}^\dagger
= \alpha(\theta)P_0 + \frac{1-\alpha(\theta)}{3}(P_1 + P_2 + P_3),
\end{align}
where
\begin{align}\label{eq:P0}
  P_0 = \mathbb{U} P_{ZZ}^+ \mathbb{U}^\dagger = \proj{H}\otimes\proj{V} + \proj{V}\otimes\proj{H},
\end{align}
and $P_i$ for $i=1,2,3$ satisfies $P_i=\proj{\widetilde{u}_i}\otimes\proj{\widetilde{v}_i}$ such that
\begin{align}
    \ket{\widetilde{u}_1}
&=  \ket{u_1} 
 =  \frac{1}{\sqrt{1+\tan\theta}}\ket{H}+\frac{e^{\frac{2\pi i}{3}}}{\sqrt{1+\cot\theta}}\ket{V}, \label{eq:P11}\\
\ket{\widetilde{v}_1}
&=  \kappa\ket{v_1} 
 =  \frac{1}{\sqrt{1+\tan\theta}}\ket{V}+\frac{e^{\frac{ \pi i}{3}}e^{i\phi}}{\sqrt{1+\cot\theta}}\ket{H}, \label{eq:P12}\\
    \ket{\widetilde{u}_2}
&=  \ket{u_2} 
 =  \frac{1}{\sqrt{1+\tan\theta}}\ket{H}+\frac{e^{\frac{4\pi i}{3}}}{\sqrt{1+\cot\theta}}\ket{V}, \label{eq:P21}\\
    \ket{\widetilde{v}_2} 
&=  \kappa\ket{v_2} 
 =  \frac{1}{\sqrt{1+\tan\theta}}\ket{V}+\frac{e^{\frac{5\pi i}{3}}e^{i\phi}}{\sqrt{1+\cot\theta}}\ket{H}, \label{eq:P22}\\
    \ket{\widetilde{u}_3}
&=  \ket{u_3} 
 =  \frac{1}{\sqrt{1+\tan\theta}}\ket{H}+\frac{1}{\sqrt{1+\cot\theta}}\ket{V}, \label{eq:P31}\\
\ket{\widetilde{v}_3} 
&= \kappa\ket{v_3} 
 = \frac{1}{\sqrt{1+\tan\theta}}\ket{V}+\frac{e^{\frac{3\pi i}{3}}e^{i\phi}}{\sqrt{1+\cot\theta}}\ket{H}. \label{eq:P32}
\end{align}

\vspace*{0.1in}
\textbf{Optimal adaptive strategy using one-way classical communication for $\ket{\psi(\theta,\phi)}$.} By
Lemma~\ref{lemma:2} and Eq.~\eqref{eq:adaTheo}, the optimal adaptive strategy for verifying $\ket{\psi(\theta,\phi)}$, when one-way classical
communication is allowed, has the following form,
\begin{align}
\Omega'^\rightarrow_{\opn{opt}} 
= \beta(\theta)\widetilde{T}_0 
+ \frac{1-\beta(\theta)}{2}\widetilde{T}_1 
+ \frac{1-\beta(\theta)}{2}\widetilde{T}_2,
\end{align}
where
\begin{align}
\widetilde{T}_0 
&= \mathbb{U}P_{ZZ}^+\mathbb{U}^\dagger
 = \proj{H}\otimes\proj{V} + \proj{V}\otimes\proj{H}, \label{eqAdp1} \\
\widetilde{T}_1 
&= \mathbb{U}T_1\mathbb{U}^\dagger \nonumber\\
&= \proj{+}\otimes\kappa\proj{v_{+}}\kappa^\dagger + \proj{-}\otimes\kappa\proj{v_{-}}\kappa^\dagger \nonumber\\
&\equiv \proj{+}\otimes\proj{\widetilde{v}_{+}} + \proj{-}\otimes\proj{\widetilde{v}_{-}}, \label{eqAdp2}\\
\widetilde{T}_2 
&= \mathbb{U}T_2\mathbb{U}^\dagger \nonumber\\
&= \proj{R}\otimes\kappa\proj{w_{+}}\kappa^\dagger + \proj{L}\otimes\kappa\proj{w_{-}}\kappa^\dagger \nonumber\\
&\equiv  \proj{R}\otimes\proj{\widetilde{w}_{+}} + \proj{L}\otimes\proj{\widetilde{w}_{-}}, \label{eqAdp3}
\end{align} 
and
\begin{alignat}{2}
  \ket{+}
&= \frac{\ket{V} + \ket{H}}{\sqrt{2}}, &\quad
  \ket{-}
&= \frac{\ket{V} - \ket{H}}{\sqrt{2}},\\
  \ket{R}
&= \frac{\ket{V} + i\ket{H}}{\sqrt{2}}, &\quad
  \ket{L}
&= \frac{\ket{V} - i\ket{H}}{\sqrt{2}},\\
  \ket{\widetilde{v}_+} 
&= \kappa\ket{v_+} = e^{i\phi}\cos\theta\ket{H}+\sin\theta\ket{V},&\quad 
  \ket{\widetilde{v}_-} 
&= \kappa\ket{v_-} = e^{i\phi}\cos\theta\ket{H}-\sin\theta\ket{V},  \label{eq:vw1}\\
  \ket{\widetilde{w}_+} 
&= \kappa\ket{w_+} = e^{i\phi}\cos\theta\ket{H}-i\sin\theta\ket{V},&\quad
  \ket{\widetilde{w}_-} 
&= \kappa\ket{w_-} = e^{i\phi}\cos\theta\ket{H}+i\sin\theta\ket{V}. \label{eq:vw2}
\end{alignat}

\vspace*{0.1in}

\textbf{Experimental optimal verification of Bell state.} Experimentally, our target Bell state has the following form
\begin{equation}
  \ket{\Phi^-} = \frac{1}{\sqrt{2}}\left(\ket{HV} - \ket{VH} \right),
\end{equation}
which is locally unitarily equivalent to the Bell state $\ket{\Phi^+}$ defined in Eq.~\eqref{eq:Bell} through
\begin{equation}
  (\1\otimes XZ)\ket{\Psi^+}
= \frac{1}{\sqrt{2}}(\1\otimes XZ)\left(\ket{HH} + \ket{VV} \right)
= \frac{1}{\sqrt{2}}\left(\ket{HV} - \ket{VH} \right)
= \ket{\Phi^-}.
\end{equation}
By Lemma~\ref{lemma:2}, the optimal adaptive strategy for verifying $\ket{\Phi^-}$ is given by
\begin{align}
\Omega'_{\opn{opt,Bell}} = \frac{1}{3}\left(M_1 + M_2 + M_3\right),
\end{align}
where
\begin{align}
  M_1 &= (\1\otimes XZ) P_{XX}^+ (\1\otimes XZ)^\dagger
      = \proj{-}\otimes\proj{+} + \proj{+}\otimes\proj{-}, \\
  M_2 &= (\1\otimes XZ) P_{YY}^- (\1\otimes XZ)^\dagger 
      = \proj{L}\otimes\proj{R} + \proj{R}\otimes\proj{L}, \\
  M_3 &= (\1\otimes XZ) P_{ZZ}^+ (\1\otimes XZ)^\dagger
      = \proj{H}\otimes\proj{V} + \proj{V}\otimes\proj{H}.
\end{align}
In the above derivation, we have used the following relations:
\begin{align}
XZ\ket{H} &= \ket{V},\quad\hspace{0.37cm} XZ\ket{V} = -\ket{H}, \\
XZ\ket{+} &= -\ket{+},\quad\hspace{0.08cm} XZ\ket{-} = \ket{+}, \\
XZ\ket{L} &= -i\ket{L},\quad XZ\ket{R} = i\ket{R}.
\end{align}

\subsection{The experimental verification procedure}

We now describe the procedure on how to verify the states $\ket{\psi(\theta,\phi)}$ described above. The parameter
$\theta$ and $\phi$ can be chosen to generate arbitrary pure states according to the angles of our wave plate (QWP1 and
HWP1 in Fig.~2 of the main text). In order to tune the angle $\theta$ to generate several states between 0 and $\pi/2$,
we let $\theta=k\cdot\pi/10$ ($k=1,2,3,4$). For reference, we list the tomographic parameters for $k=2$ (k2), maximally
entangled state (Max) and product state (HV) used in this article in Supplementary Table~\ref{truestate}.
\begin{table}[!htbp]
	\centering
	\caption{Target states verified in our experiment.}
	\label{truestate}
	\begin{tabular}{|c|c|c|c|c|c|}
		\hline
		k       & $k\pi/10$ & $\theta$ & $\phi$ & Fidelity            & Passing Prob. $m_{\text{pass}}/N$ \\
		\hline
		k2      & 0.6283    & 0.6419   & 3.2034 & 0.9964$\pm$0.0002   & 0.9986$\pm$0.0002              \\
		\hline
		Max     & $\pi/4$   & $\pi/4$  & $\pi$  & 0.9973$\pm$0.0002   & 0.9982$\pm$0.0002              \\
		\hline
		HV      & $\pi/2$   & $\pi/2$  & 0      & 0.9992$\pm$0.0001   & 0.9992$\pm$0.0001              \\
		\hline
	\end{tabular}
\end{table}

\summary{Determine the target state.} For our target state defined in Eq.~\eqref{target}, $\theta$ and $\phi$ are two
parameters to be set experimentally. For each $k$, we first calculate the ratio of $HV$ component to $VH$ component,
i.e.,~$(\sin\theta/\cos\theta)^2$, in the target state. Based on this ratio, we rotate the angles of QWP1 and HWP1 of
pump light to generate this target state. Then, the density matrix $\rho$ of the target state can be obtained by means
of the maximum likelihood estimations. The overlap between $\ket{\psi}$ and $\rho$ is used as our objective function. A
global search algorithm is used to minimize the objective function $1-\bra{\psi}\rho\ket{\psi}$ with $\theta$ and
$\phi$ as two free parameters in $\ket{\psi}$. Finally, we obtain the values of these two parameters with our
optimization program~\footnote{During the optimization, we use the program offered by \href{http://research.physics.illinois.edu/QI/Photonics/Tomography/}{Kwiat Quantum Information Group}. The code can be downloaded from their website. Last accessed: 2018-05-10.}.

\summary{The projective measurements.}
The nonadaptive strategy requires four projective measurements. From the expression of the four projectors (Eq.~(\ref{eq:P0}-\ref{eq:P32})), we can see that $P_0$ is the standard $\sigma_z$ projection while $P_i$ ($i=1,2,3$) are three general projectors that are realized by rotating the angles of wave plates in the state analyser (see Fig.~2 in main text). The function of $P_i$ projectors is to transform the $\ket{\widetilde{u}_i}$ and $\ket{\widetilde{v}_i}$ states to the $\ket{H}$ state with combinations of QWP and HWP. Therefore, we treat the angles of QWP and HWP as two quantities to realize the transformation utilizing Jones matrix method. By solving the equations, we find the angles of wave plates that realizes the four projective measurements.

For the adaptive strategy, three measurements $\widetilde{T}_0$, $\widetilde{T}_1$ and $\widetilde{T}_2$ are required. From Eq.~(\ref{eqAdp1}-\ref{eq:vw2}), we can see that $\widetilde{T}_0$ is the standard $\sigma_z$ projection. The $\widetilde{T}_1$ and $\widetilde{T}_2$ measurements requires classical communication to transmit the results of Alice's measurements to Bob. Then Bob applies the corresponding measurements using the electro-optic modulators (EOMs) according to Alice's results. For $\widetilde{T}_1$, Alice implements the $\{\ket{+}, \ket{-}\}$ orthogonal measurements while Bob's two measurements $\ket{\widetilde{v}_{+}}$ and $\ket{\widetilde{v}_{-}}$ are nonorthogonal. For $\widetilde{T}_2$, Alice's measurements $\{\ket{R}, \ket{L}\}$ are orthogonal whereas Bob need to apply the nonorthogonal measurements $\ket{\widetilde{w}_{+}}$ and $\ket{\widetilde{w}_{-}}$ accordingly. In the following, we describe how to realize these adaptive measurements by means of local operations and classical communication (LOCC) in real time.

In order to implement the $\{\ket{\widetilde{v}_{+}}, \ket{\widetilde{v}_{-}}\}$ and $\{\ket{\widetilde{w}_{+}}, \ket{\widetilde{w}_{-}}\}$ measurements at Bob's site according to Alice's measurement outcomes, we use two EOMs, as shown in Supplementary Figure~\ref{fig:Adp}. The $\ket{\widetilde{v}_{+}}$ and $\ket{\widetilde{w}_{+}}$ polarized states are rotated to $H$ polarization with EOM1, which finally exit from the transmission port of polarized beam splitter (PBS, the wave plates before PBS are rotated to $HV$ bases). Correspondingly, the $\ket{\widetilde{v}_{-}}$ and $\ket{\widetilde{w}_{-}}$ are transformed to the $V$ polarization with EOM2 and exit from the reflection port of PBS. We use the output of $\ket{+}$/$\ket{R}$ to trigger the EOM1 and the output of $\ket{-}$/$\ket{L}$ to trigger the EOM2, respectively. This design guarantees that only one EOM takes effect and the other executes the identity operation for each of the generated state. In Supplementary Table~\ref{AdpMeas}, we list the operations realized with these two EOMs. We can see that the adaptive measurements are genuinely realized with our setup.
\begin{figure}[!htbp]
  \centering
  	\includegraphics[width=0.5\textwidth]{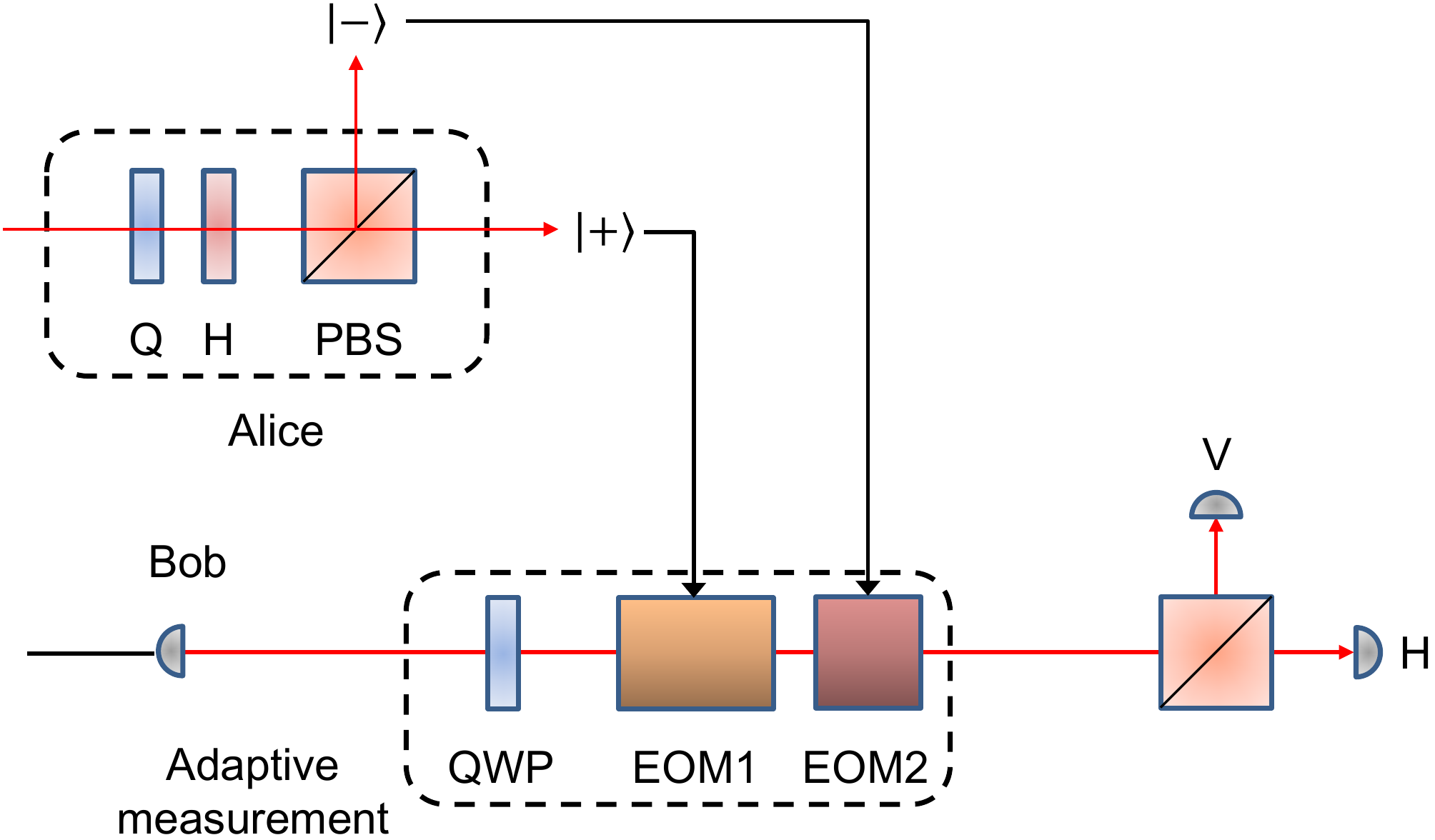}
  \caption{The adaptive measurements implemented with electro-optic modulator (EOM). Here we take $\widetilde{T}_1$ measurement as an illustration. The output signal of $\ket{+}$ is used to trigger EOM1 to execute the $\ket{\widetilde{v}_{+}}$ measurement, while the output signal of $\ket{-}$ is used to trigger EOM2 to execute the $\ket{\widetilde{v}_{-}}$ measurement. Only one EOM is triggered at a time and the other EOM is an identity operation, which guarantees the adaptive measurements are realized in real time. A QWP before the two EOMs is used to compensate the extra phase from $\widetilde{T}_1$ to $\widetilde{T}_2$ measurement.}
  \label{fig:Adp}
\end{figure}

There are two key ingredients for our adaptive measurement that distinguish it from the nonadaptive measurement. First, Alice produces binary outcomes for a given basis. In each run, only one of these two outcomes will be obtained. Bob's measurements have to be adaptive according to Alice's measurement results. Second, it doesn't require fast changes among the $\widetilde{T}_0$, $\widetilde{T}_1$ and $\widetilde{T}_2$ measurements. This can be seen from the fact that $\{\ket{\widetilde{w}_{+}}, \ket{\widetilde{w}_{-}}\}$ have an extra $\pi/2$ phase compared with the $\{\ket{\widetilde{v}_{+}}, \ket{\widetilde{v}_{-}}\}$ bases (see Eqs.~(\ref{eq:vw1},\ref{eq:vw2})). We add a QWP (see Supplementary Figure~\ref{fig:Adp}) before the two EOMs to compensate this extra phase. We calibrate the optical axis of these two EOMs and fix them. If we want to perform from $\widetilde{T}_1$ to $\widetilde{T}_2$ measurement or vice versa, we add or remove this QWP. For $\widetilde{T}_0$ measurement, we turn off the drivers of these two EOMs and remove the QWP. In the single-photon experiment, we can optimize the modulation contrast of these two EOMs with the coincidence ratio of $CC_{+\widetilde{v}_{+}}$/$CC_{+\widetilde{v}_{+}^\perp}$ and $CC_{-\widetilde{v}_{-}}$/$CC_{-\widetilde{v}_{-}^\perp}$ as our optimization goals (see Supplementary Table~\ref{AdpMeas}).
\begin{table}[!htbp]
	\centering
	\caption{Adaptive measurements \{$\widetilde{T}_0$, $\widetilde{T}_1$, $\widetilde{T}_2$\} realized with two EOMs.}\label{AdpMeas}
	\begin{threeparttable}
	\begin{tabular}{|c|c|c|c|c|c|c|}
		\hline
		Setting & Alice   & \multicolumn{2}{c|}{Bob's operations}                & Output & Modulation contrast* & Success probability      \\
		\hline
		\multirow{2}{*}{$\widetilde{T}_0$} & $H$     & $I$ (EOM1) & $I$ (EOM2) & --- & $CC_{HV}$/$CC_{HH}$ &  \multirow{2}{*}{$\frac{CC_{HV}+CC_{VH}}{CC_{HV}+CC_{HH}+CC_{VV}+CC_{VH}}$}  \\
		\cline{2-6}
		                                   & $V$     & $I$ (EOM1) & $I$ (EOM2) & --- & $CC_{VH}$/$CC_{VV}$ &   \\
		\hline
		\multirow{2}{*}{$\widetilde{T}_1$} & $+$     & $\widetilde{v}_{+}$ (EOM1) & $I$ (EOM2) & $\widetilde{v}_{+}\rightarrow H$ & $CC_{+\widetilde{v}_{+}}$/$CC_{+\widetilde{v}_{+}^\perp}$ &  \multirow{2}{*}{$\frac{CC_{+\widetilde{v}_{+}}+CC_{-\widetilde{v}_{-}}}{CC_{+\widetilde{v}_{+}}+CC_{+\widetilde{v}_{-}}+CC_{-\widetilde{v}_{+}}+CC_{-\widetilde{v}_{-}}}$}  \\
		\cline{2-6}
		                                   & $-$     & $I$ (EOM1) & $\widetilde{v}_{-}$ (EOM2) & $\widetilde{v}_{-}\rightarrow V$ & $CC_{-\widetilde{v}_{-}}$/$CC_{-\widetilde{v}_{-}^\perp}$ &   \\
		\hline
		\multirow{2}{*}{$\widetilde{T}_2$} & $R$     & $\widetilde{w}_{+}$ (EOM1) & $I$ (EOM2) & $\widetilde{w}_{+}\rightarrow H$ & $CC_{R\widetilde{w}_{+}}$/$CC_{R\widetilde{w}_{+}^\perp}$ &  \multirow{2}{*}{$\frac{CC_{R\widetilde{w}_{+}}+CC_{L\widetilde{w}_{-}}}{CC_{R\widetilde{w}_{+}}+CC_{R\widetilde{w}_{-}}+CC_{L\widetilde{w}_{+}}+CC_{L\widetilde{w}_{-}}}$}  \\
		\cline{2-6}
		                                   & $L$     & $I$ (EOM1) & $\widetilde{w}_{-}$ (EOM2) & $\widetilde{w}_{-}\rightarrow V$ & $CC_{L\widetilde{w}_{-}}$/$CC_{L\widetilde{w}_{-}^\perp}$ &   \\
		\hline
	\end{tabular}
	\begin{tablenotes}
	\footnotesize
	\item[*] $\widetilde{v}_{+}^\perp$/$\widetilde{w}_{+}^\perp$ and $\widetilde{v}_{-}^\perp$/$\widetilde{w}_{-}^\perp$ are the orthogonal states to $\widetilde{v}_{+}$/$\widetilde{w}_{+}$ and $\widetilde{v}_{-}$/$\widetilde{w}_{-}$, respectively.
	\end{tablenotes}
	\end{threeparttable}
\end{table}

\section{Applying the verification strategy to Task B}

\noindent In this section, we explain in detail how we use the verification strategy proposed in~\ref{sec1} to execute
\textbf{Task B}. Generally speaking, \textbf{Task B} is to distinguish whether the quantum device outputs states
$\epsilon$-far alway from the target state $\ket{\Psi}$ or not. Let $\cS$ be the set of states that are at least
$\epsilon$-far always from $\ket{\Psi}$, i.e.,
\begin{align}
    \cS := \{\sigma_i: \bra{\Psi}\sigma_i\ket{\Psi} \leq 1 - \epsilon\},
\end{align}
and $\overline{\cS}$ to be the complement of $\cS$, which is the set of states that are $\epsilon$-close to $\ket{\Psi}$. See
Supplementary Figure~\ref{fig:sets} for illustration.
\begin{figure}[!htbp]
  \centering
  \includegraphics[width=0.4\textwidth]{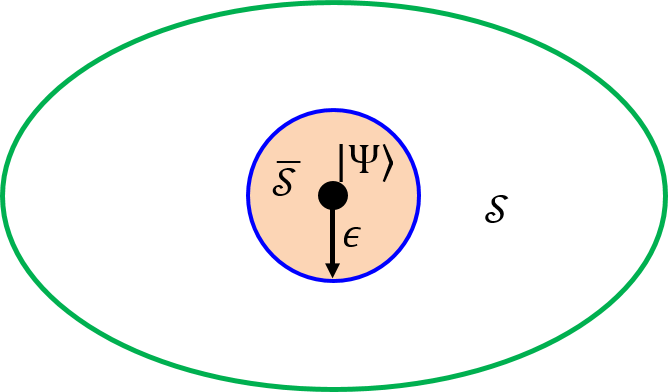}
  \caption{$\cS$ is the set of states that are at least $\epsilon$-far always from $\ket{\Psi}$, that is, states that are out
          of the brown circle. The states that are on the circle (the blue circle, $\bra{\Psi}\sigma\ket{\Psi} = 1
          -\epsilon$) are the most difficult to be distinguished from $\ket{\Psi}$, compared to states in $\cS$.}
  \label{fig:sets}
\end{figure}

Before going to the details, we compare briefly the differences between \textbf{Task A} and \textbf{Task B}.
\begin{enumerate}
  \item \textbf{Task A}~\cite{SPhysRevLett.120.170502,Swang2019optimal} considers the problem: 
        Performing tests sequentially, how many tests are required before we identify a $\sigma_i$ that does not pass
        the test? Once we find a $\sigma_i$ not passing the test, we conclude that the device is bad. Each test is
        a Bernoulli trial with probability $\Delta_\epsilon$ determined by the chosen verification strategy. The
        expected number of tests required is given by $1/\Delta_\epsilon$. It should be stressed that this expectation
        is achieved with the assumption that all generated states are $\epsilon$-far from $\ket{\Psi}$. Once the
        generated states are much further from $\ket{\Psi}$ by $\epsilon$ (\textbf{Case 2}), then the observed number
        of tests will be less than $1/\Delta_\epsilon$.
  \item \textbf{Task B}~\cite{Syu2019optimal} considers the problem: Performing $N$ tests, how many states $\sigma_i$
        can pass these tests on average? If all generated states are \textit{exactly} $\epsilon$-far from $\ket{\Psi}$,
        then the expected number of states passing the test is $N\mu$, where $\mu$ is the expected passing
        probability of each generated state, determined by the chosen verification strategy. If all generated states
        are much far from $\ket{\Psi}$ by $\epsilon$ (\textbf{Case 2}), then the expected number of states passing the
        test is less than $N\mu$. If all generated states are much close to $\ket{\Psi}$ by $\epsilon$ (\textbf{Case
        1}), then the expected number of states passing the test is larger than $N\mu$.
\end{enumerate}

\subsection{Applying the nonadaptive strategy to Task B}

Here we illustrate how to use the nonadaptive verification strategy $\Omega_{\opn{opt}}$ proposed in~\ref{sec:nonadaptive} to execute \textbf{Task B}. First of all, the following two lemmas are essential, which
state that for arbitrary state belonging to $\cS$ ($\overline{\cS}$), its probability of passing the test
$\Omega_{\opn{opt}}$ is upper (lower) bounded.

\begin{lemma}\label{lemma:pass-prob}
For arbitrary state $\sigma\in\cS$, it can pass $\Omega_{\opn{opt}}$ with probability no larger than $1 -
[1-\lambda_2(\Omega_{\opn{opt}})]\epsilon$, where $\lambda_2(\Omega_{\opn{opt}})$ is the second largest eigenvalue of
$\Omega_{\opn{opt}}$ given in Eq.~\eqref{eq:Omega-2-non}. The upper bound is achieved by states in $\cS$ that are exactly
$\epsilon$-far from $\ket{\Psi}$.
\end{lemma}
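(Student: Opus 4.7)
The plan is to reduce the bound to a two-parameter optimization by exploiting the very simple spectrum of $\Omega_{\opn{opt}}$. From the spectral decomposition in Eq.~\eqref{eq:Omega-non-sd}, the three-dimensional orthogonal complement of $\ket{\Psi}$ is a common eigenspace of eigenvalue $\lambda_2(\Omega_{\opn{opt}})$. Hence I would rewrite
\begin{equation}
\Omega_{\opn{opt}} = \ket{\Psi}\bra{\Psi} + \lambda_2(\Omega_{\opn{opt}})\bigl(\1 - \ket{\Psi}\bra{\Psi}\bigr) = \bigl(1-\lambda_2(\Omega_{\opn{opt}})\bigr)\ket{\Psi}\bra{\Psi} + \lambda_2(\Omega_{\opn{opt}})\1,
\end{equation}
which is the key structural observation that makes the lemma essentially a one-line computation.

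Taking the trace against an arbitrary state $\sigma$ then gives
\begin{equation}
\tr(\Omega_{\opn{opt}}\sigma) = \bigl(1-\lambda_2(\Omega_{\opn{opt}})\bigr)\bra{\Psi}\sigma\ket{\Psi} + \lambda_2(\Omega_{\opn{opt}}).
\end{equation}
Since $1-\lambda_2(\Omega_{\opn{opt}}) > 0$, the right-hand side is monotonically increasing in the fidelity $\bra{\Psi}\sigma\ket{\Psi}$. The hypothesis $\sigma \in \cS$ means $\bra{\Psi}\sigma\ket{\Psi}\leq 1-\epsilon$, so substituting this upper bound yields
\begin{equation}
\tr(\Omega_{\opn{opt}}\sigma) \leq \bigl(1-\lambda_2(\Omega_{\opn{opt}})\bigr)(1-\epsilon) + \lambda_2(\Omega_{\opn{opt}}) = 1 - \bigl(1-\lambda_2(\Omega_{\opn{opt}})\bigr)\epsilon,
\end{equation}
which is exactly the claimed bound.

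For the tightness statement, I would exhibit a concrete saturating state, for instance $\sigma^\star = (1-\epsilon)\ket{\Psi}\bra{\Psi} + \epsilon\ket{\Psi^\perp}\bra{\Psi^\perp}$, which lies on the boundary sphere $\bra{\Psi}\sigma^\star\ket{\Psi} = 1-\epsilon$ and, by the monotonicity argument above, achieves equality. I do not anticipate any real obstacle: the proof is a direct specialization of the one-line argument that underlies Eq.~\eqref{eq:fail-probability}, the only new ingredient being the observation that $\Omega_{\opn{opt}}$ has effectively only two distinct eigenvalues, which collapses what could have been a more delicate optimization over $\cS$ into a linear function of the fidelity alone.
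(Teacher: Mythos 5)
Your proof is correct, and it takes a genuinely cleaner route than the paper's. The paper proves this lemma by using the spectral decomposition~\eqref{eq:Omega-non-sd} to argue that the off-diagonal parts of $\sigma$ in the eigenbasis $\{\ket{\Psi},\ket{\Psi^\perp},\ket{HV},\ket{VH}\}$ do not affect $\tr(\Omega_{\opn{opt}}\sigma)$, so that $\sigma$ may be taken without loss of generality to be a diagonal mixture with weights $p_1,\dots,p_4$ satisfying $p_1\leq 1-\epsilon$, and then runs the two-step chain
\begin{equation}
p_1+\sum_{i=2}^{4}\lambda_i p_i \;\leq\; p_1+(1-p_1)\lambda_2 \;\leq\; 1-(1-\lambda_2)\epsilon,
\end{equation}
where the first inequality uses the ordering $\lambda_2\geq\lambda_3\geq\lambda_4$. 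You instead exploit the degeneracy specific to the nonadaptive strategy---all three non-top eigenvalues coincide---to collapse the operator to $\Omega_{\opn{opt}}=(1-\lambda_2(\Omega_{\opn{opt}}))\proj{\Psi}+\lambda_2(\Omega_{\opn{opt}})\1$, making the passing probability an affine, monotone function of the fidelity with no diagonalization step at all. Your route buys a one-line proof and a strictly sharper tightness statement: \emph{every} state with $\bra{\Psi}\sigma\ket{\Psi}=1-\epsilon$ saturates the bound, not merely the exhibited $\sigma^\star$, which is exactly what the lemma's last sentence asserts. The paper's route buys generality: its eigenvalue-ordering argument applies verbatim to strategies with nondegenerate spectrum, which is why the paper can reuse it unchanged for Lemma~\ref{lemma:pass-prob-adp} concerning the adaptive strategy $\Omega^\rightarrow_{\opn{opt}}$, whose smallest eigenvalue $\lambda_4$ differs from $\lambda_2$; there your identity fails, although the operator inequality $\Omega^\rightarrow_{\opn{opt}}\leq(1-\lambda_2(\Omega^\rightarrow_{\opn{opt}}))\proj{\Psi}+\lambda_2(\Omega^\rightarrow_{\opn{opt}})\1$ would still recover the upper bound.
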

Intuitively, the further a state $\sigma\in\cS$ is from $\ket{\Psi}$, the smaller the probability it passes the test
$\Omega_{\opn{opt}}$. Lemma~\ref{lemma:pass-prob} justifies this intuition and shows quantitatively that the passing
probability for states in $\cS$ cannot be larger than $1 - [1-\lambda_2(\Omega_{\opn{opt}})]\epsilon$. However, we remark that it
is possible that for some states in $\cS$ the passing probability is small.
\begin{proof}
From the spectral decomposition Eq.~\eqref{eq:Omega-non-sd} of $\Omega_{\opn{opt}}$ and the fact that the off-diagonal
parts of $\sigma$ do not affect the trace $\tr[\Omega_{\opn{opt}}\sigma]$, we can assume without loss of
generality that $\sigma$ has the form
\begin{align}
  \sigma = p_1\proj{\Psi} + p_2\proj{\Psi^\perp} + p_3\proj{HV} + p_4\proj{VH},\;\sum_{i=1}^4 p_i = 1,\;
  p_1 \leq 1 - \epsilon,
\end{align} 
where the last constraint $p_1 \leq 1 - \epsilon$ follows from the precondition that $\sigma\in\cS$. Then
\begin{align}
  \tr\left[\Omega_{\opn{opt}}\sigma\right]
&= p_1 + \sum_{i=2}^4\lambda_ip_i \nonumber\\
&\leq p_1 + (1-p_1)\lambda_2 \nonumber\\
&\leq (1-\epsilon)(1 - \lambda_2) + \lambda_2 \nonumber\\
&= 1 - (1 - \lambda_2)\epsilon,
\end{align}
where $\lambda_i$ is the $i$-th eigenvalue of $\Omega_{\opn{opt}}$, the first inequality follows from
$\lambda_2\geq\lambda_3\geq\lambda_4$ and the second inequality follows from $p_1 \leq 1 - \epsilon$.
\end{proof}

\begin{lemma}\label{lemma:pass-prob-2}
For arbitrary state $\sigma\in\overline{\cS}$, it can pass the nonadaptive strategy $\Omega_{\opn{opt}}$ with
probability no smaller than $1 - [1-\lambda_2(\Omega_{\opn{opt}})]\epsilon$. The lower bound is achieved by states in
$\overline{\cS}$ that are exactly $\epsilon$-far from $\ket{\Psi}$.
\end{lemma}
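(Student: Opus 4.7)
My plan is to mirror the argument used for Lemma~\ref{lemma:pass-prob} almost verbatim, just with the inequality direction reversed, leveraging the fact that in the spectral decomposition~\eqref{eq:Omega-non-sd} the three non-top eigenvalues of $\Omega_{\opn{opt}}$ coincide and all equal $\lambda_2(\Omega_{\opn{opt}})$.

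First, because the off-diagonal entries of $\sigma$ (in the basis $\{\ket{\Psi},\ket{\Psi^\perp},\ket{HV},\ket{VH}\}$) do not contribute to $\tr(\Omega_{\opn{opt}}\sigma)$, I would restrict attention without loss of generality to states of the form
\begin{align*}
\sigma = p_1\proj{\Psi} + p_2\proj{\Psi^\perp} + p_3\proj{HV} + p_4\proj{VH},
\end{align*}
with $\sum_i p_i = 1$ and, since $\sigma\in\overline{\cS}$, the constraint $p_1=\bra{\Psi}\sigma\ket{\Psi}> 1-\epsilon$.

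Next I would use Eq.~\eqref{eq:Omega-non-sd} together with the observation $\lambda_2=\lambda_3=\lambda_4$ for this strategy, which collapses the second through fourth spectral contributions into a single coefficient. This gives the key identity
\begin{align*}
\tr[\Omega_{\opn{opt}}\sigma]
= p_1 + \lambda_2(\Omega_{\opn{opt}})\,(p_2+p_3+p_4)
= \lambda_2(\Omega_{\opn{opt}}) + \bigl[1-\lambda_2(\Omega_{\opn{opt}})\bigr]\,p_1,
\end{align*}
so the passing probability is a strictly increasing affine function of $p_1$. Applying the constraint $p_1\geq 1-\epsilon$ immediately yields the desired bound $\tr[\Omega_{\opn{opt}}\sigma]\geq 1-[1-\lambda_2(\Omega_{\opn{opt}})]\epsilon$, with equality exactly when $p_1=1-\epsilon$, i.e.\ when $\sigma$ sits on the boundary of $\overline{\cS}$.

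Honestly there is no real obstacle here: the degeneracy $\lambda_2=\lambda_3=\lambda_4$ removes the need for any case analysis over how the remaining weight is distributed, so the result is just the dual of Lemma~\ref{lemma:pass-prob}. The only point worth being careful about is the reduction to diagonal $\sigma$ at the start, which one should justify by noting that $\Omega_{\opn{opt}}$ is diagonal in the chosen basis so $\tr(\Omega_{\opn{opt}}\sigma)$ depends only on the diagonal of $\sigma$ in that basis. Everything else is a one-line inequality.
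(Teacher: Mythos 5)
Your proposal is correct and follows essentially the same route as the paper's own proof: reduce to a $\sigma$ diagonal in the basis $\{\ket{\Psi},\ket{\Psi^\perp},\ket{HV},\ket{VH}\}$ using the spectral decomposition of $\Omega_{\opn{opt}}$, exploit the degeneracy of the three lower eigenvalues to get $\tr[\Omega_{\opn{opt}}\sigma]=p_1+\lambda_2(\Omega_{\opn{opt}})(1-p_1)$, and apply $p_1\geq 1-\epsilon$. Your explicit remark that the passing probability is an increasing affine function of $p_1$ is a slightly cleaner way of phrasing the same one-line inequality, and your justification of the diagonal reduction matches the paper's.
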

\begin{proof}
Following the proof for Lemma~\ref{lemma:pass-prob} and considering the spectral decomposition Eq.~\eqref{eq:Omega-non-sd}
of $\Omega_{\opn{opt}}$, we can assume without loss of generality that $\sigma$ has the form
\begin{align}
  \sigma =  p_1\proj{\Psi} + p_2\proj{\Psi^\perp} + p_3\ket{VH} + p_4\ket{HV},\;\sum_{i=1}^4 p_i = 1,\;
            p_1 \geq 1 - \epsilon,
\end{align} 
where the last constraint $p_1 \geq 1 - \epsilon$ follows from the precondition that $\sigma\in\overline{\cS}$.
Then
\begin{align}
  \tr\left[\Omega_{\opn{opt}}\sigma\right]
=&\; p_1 + \lambda_2(\Omega_{\opn{opt}})\sum_{i=2}^4p_i \nonumber\\
=&\; p_1 + \lambda_2(\Omega_{\opn{opt}})(1-p_1) \nonumber\\
\geq&\; (1-\epsilon)(1 - \lambda_2(\Omega_{\opn{opt}})) + \lambda_2(\Omega_{\opn{opt}}) \nonumber\\
=&\; 1 - (1 - \lambda_2(\Omega_{\opn{opt}}))\epsilon,
\end{align}
where the first equality follows from the spectral decomposition of Eq.~\eqref{eq:Omega-non-sd} and the inequality follows from
 $p_1 \geq 1 - \epsilon$.
\end{proof}

\textbf{Task B} is to distinguish among the following \textit{two} cases for a given quantum device $\cD$:
\begin{description}
  \item[\textbf{Case 1}] $\forall i$, $\sigma_i\in\overline{\cS}$.
                          That is, the device always generate states that are 
                          sufficient close to the target state
                          $\ket{\Psi}$. If it is the case, we regard the device as ``good''.
  \item[\textbf{Case 2}] $\forall i$, $\sigma_i\in\cS$. 
                          That is, the device always generate states that are 
                          sufficient far from the target state
                          $\ket{\Psi}$. If it is the case, we regard the device as ``bad''.
\end{description}

To execute \textbf{Task B}, we perform $N$ tests on the outputs of the device with $\Omega_{\opn{opt}}$ and record the number of tests that pass the
test as $m_{\text{pass}}$. What conclusion can we obtain from the relation between $m_{\text{pass}}$ and $N$? From
Lemma~\ref{lemma:pass-prob}, we know that if the device is bad (belonging to \textbf{Case 2}), $m_\text{pass}$ cannot be
too large, since the passing probability of each generated state is upper bounded. Conversely,
Lemma~\ref{lemma:pass-prob-2} guarantees that if the device is good (belonging to \textbf{Case 1}), $m_{\text{pass}}$ cannot
be too small, since the passing probability of each generated state is lower bounded. Let's then justify this
intuition rigorously. We define the binary random variable $X_i$ to represent the event that whether state $\sigma_i$
passes the test or not. If it passes, we set $X_i=1$; If it fails to pass the test, we set $X_i=0$. After $N$ tests, we
obtain a sequence of independent distribution random variables $\{X_i\}_{i=1}^N$. Then $m_\text{pass} =
\sum_{i=1}^N X_i$. Now let's analyze the expectation of each $X_i$. Let
$\mu:=1-[1-\lambda_2(\Omega_{\opn{opt}})]\epsilon$. If the device were `good' (it belongs to \textbf{Case 1}), then
Lemma~\ref{lemma:pass-prob-2} implies that the expectation of $X_i$, denoted as $\mathbb{E}(X_i)$, shall satisfy
$\mathbb{E}(X_i) \geq \mu$. The independent assumption together with the law of large numbers guarantee $m_\text{pass}\geq N\mu$,
when $N$ is sufficiently large. On the other hand, if the device were `bad' (it belongs to \textbf{Case 2}), then
Lemma~\ref{lemma:pass-prob} asserts that the expectation of $X_i$ shall satisfy $\mathbb{E}(X_i) \leq \mu$. Again, the
independent assumption together with the law of large numbers guarantee $m_\text{pass}\leq N\mu$, when $N$ is sufficiently large.
That is to say, we consider two regions regarding the value of $m_\text{pass}$: the region that $m_\text{pass} \leq N\mu$ and the
region that $m_\text{pass} \geq N\mu$. We refer to Supplementary Figure~\ref{fig:m-to-n} for illustration.
\begin{figure}[!htbp]
  \centering
  \includegraphics[width=0.6\textwidth]{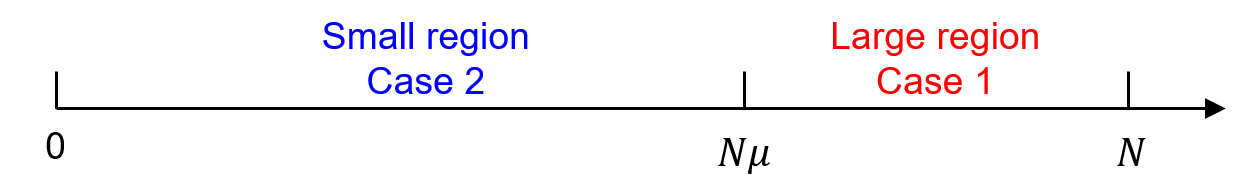}
  \caption{The number of copies $m_\text{pass}$ that pass the nonadaptive strategy $\Omega_{\opn{opt}}$ determines the
  quality of the device. If $m_\text{pass} \leq N\mu$, the device is very likely to be \textbf{Case 2} (\textbf{Small
  Region}). Conversely, the device is very likely to be \textbf{Case 1} (\textbf{Large Region}) if $m_\text{pass} \geq
  N\mu$.}
  \label{fig:m-to-n}
\end{figure}

\textbf{Large Region: $m_\text{pass} \geq N\mu$.} In this region, the device belongs to \textbf{Case 1} with high
probability, as only in this case $m_\text{pass}$ can be large. Following the analysis methods in
refs.~\cite{SDimi2018,SSaggio2019,Syu2019optimal,Szhang2019experimental}, we use the Chernoff bound to upper bound the probability that
the device belongs to \textbf{Case 2} as
\begin{align}\label{Large}
\delta\equiv e^{-N\operatorname{D}\left(\frac{m_\text{pass}}{N}\middle\lVert\mu\right)},
\end{align}
where $\operatorname{D}\left(x\lVert y\right):=x\log_2\frac{x}{y}+(1-x)\log_2\frac{1-x}{1-y}$ the Kullback-Leibler
divergence. That is to say, if the device \textit{did} belong to \textbf{Case 2}, the probability that $m_\text{pass}$ is
larger than $N\mu$ decays exponentially as $m_\text{pass}$ becomes larger. In this case, we reach a conclusion:
\begin{center}
{\it The device belongs to \textbf{Case 2} with probability at most $\delta$.}
\end{center}
Equivalently, we conclude that
\begin{center}
{\it The device belongs to \textbf{Case 1} with probability at least $1 - \delta$.}
\end{center}

\textbf{Small Region: $m_\text{pass} \leq N\mu$.} The analysis for this region is almost the same as that of the large
region. In this region, the device belongs to \textbf{Case 2} with high probability, as only in this case $m_\text{pass}$
can be smaller than $N\mu$. Following the analysis methods in
refs.~\cite{SDimi2018,SSaggio2019,Syu2019optimal,Szhang2019experimental}, we use the Chernoff bound to upper bound the probability that
the device belongs to \textbf{Case 1} as
\begin{align}\label{small}
\delta\equiv e^{-N\operatorname{D}\left(\frac{m_\text{pass}}{N}\middle\lVert\mu\right)}.
\end{align}
That is to say, if the device \textit{did} belong to \textbf{Case 1}, the probability that $m_\text{pass}$ is less than
$N\mu$ decays exponentially as $m_\text{pass}$ becomes smaller. In this case, we reach a conclusion:
\begin{center}
{\it The device belongs to \textbf{Case 1} with probability at most $\delta$.}
\end{center}
Equivalently, we conclude that
\begin{center}
{\it The device belongs to \textbf{Case 2} with probability at least $1 - \delta$.}
\end{center}

\subsection{Applying the adaptive strategy to Task B}

The way that we use the adaptive verification strategy $\Omega^\rightarrow_{\opn{opt}}$ proposed in~\ref{sec:adaptive} to execute \textbf{Task B} is similar to that discussed in the previous section. The following
two lemmas are required, which state that for arbitrary state belonging to $\cS$ ($\overline{\cS}$), its probability of
passing the test $\Omega^\rightarrow_{\opn{opt}}$ is upper (lower) bounded.

\begin{lemma}\label{lemma:pass-prob-adp}
For arbitrary state $\sigma\in\cS$, it can pass $\Omega^\rightarrow_{\opn{opt}}$ with probability no larger than
$1 - [1-\lambda_2(\Omega^\rightarrow_{\opn{opt}})]\epsilon$, where $\lambda_2(\Omega^\rightarrow_{\opn{opt}})$ is the
second largest eigenvalue of $\Omega^\rightarrow_{\opn{opt}}$ given in Eq.~\eqref{eq:Omega-2-adp}. The upper bound is
achieved by states in $\cS$ that are exactly $\epsilon$-far from $\ket{\Psi}$.
\end{lemma}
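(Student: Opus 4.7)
The plan is to adapt the proof of Lemma~\ref{lemma:pass-prob} to the adaptive setting, exploiting the spectral decomposition of $\Omega^\rightarrow_{\opn{opt}}$ given in Eq.~\eqref{eq:Omega-adp-sd}. First, I would observe that $\Omega^\rightarrow_{\opn{opt}}$ is diagonal in the orthonormal basis $\{\ket{\Psi},\ket{\Psi^\perp},\ket{HV},\ket{VH}\}$, so the off-diagonal components of $\sigma$ in this basis do not contribute to $\tr[\Omega^\rightarrow_{\opn{opt}}\sigma]$. Hence without loss of generality we may assume
\begin{equation}
  \sigma = p_1\proj{\Psi} + p_2\proj{\Psi^\perp} + p_3\proj{HV} + p_4\proj{VH},\quad \sum_{i=1}^4 p_i = 1,
\end{equation}
where the hypothesis $\sigma\in\cS$ translates into $p_1=\bra{\Psi}\sigma\ket{\Psi}\leq 1-\epsilon$.

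Next I would compute the passing probability directly from the spectral decomposition: writing $\lambda_2\equiv\lambda_2(\Omega^\rightarrow_{\opn{opt}})=\cos^2\theta/(1+\cos^2\theta)$ and $\lambda_4\equiv\lambda_4(\Omega^\rightarrow_{\opn{opt}})=\sin^2\theta/(1+\cos^2\theta)$, we have
\begin{equation}
  \tr[\Omega^\rightarrow_{\opn{opt}}\sigma] = p_1 + \lambda_2 p_2 + \lambda_2 p_3 + \lambda_4 p_4.
\end{equation}
The key observation is that since we are in the regime $\theta\in(0,\pi/4]$, the ordering $\lambda_2\geq\lambda_4$ holds, so
\begin{equation}
  \lambda_2 p_2 + \lambda_2 p_3 + \lambda_4 p_4 \leq \lambda_2(p_2+p_3+p_4) = \lambda_2(1-p_1).
\end{equation}
Therefore $\tr[\Omega^\rightarrow_{\opn{opt}}\sigma]\leq p_1 + \lambda_2(1-p_1) = \lambda_2 + (1-\lambda_2)p_1$, and applying the constraint $p_1\leq 1-\epsilon$ gives the claimed bound $1-(1-\lambda_2)\epsilon$. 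Tightness is seen by setting $p_1=1-\epsilon$ and placing all remaining weight on $\proj{\Psi^\perp}$ (or $\proj{HV}$), which yields a state exactly $\epsilon$-far from $\ket{\Psi}$ that saturates the inequality.

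The only subtlety, and really the one place where the adaptive case differs structurally from Lemma~\ref{lemma:pass-prob}, is the presence of the strictly smaller eigenvalue $\lambda_4$ on the $\ket{VH}$ subspace; I need $\lambda_2\geq\lambda_4$ so that distributing mass into the $\lambda_4$-eigenspace can only decrease the passing probability, which is why the worst case puts no weight on $\ket{VH}$. This is automatic from $\theta\in[0,\pi/4]$, so the argument goes through with essentially no new work beyond the nonadaptive proof.
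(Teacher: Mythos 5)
Your proof is correct and takes essentially the same route as the paper: the paper proves the nonadaptive analogue (Lemma~\ref{lemma:pass-prob}) by exactly this diagonalization-plus-eigenvalue-ordering argument and then explicitly omits the adaptive proof as being ``the same,'' which is precisely the adaptation you carry out. Your one explicit addition---that $\lambda_2(\Omega^\rightarrow_{\opn{opt}})\geq\lambda_4(\Omega^\rightarrow_{\opn{opt}})$ for $\theta\in(0,\pi/4]$, so mass in the $\ket{VH}$ eigenspace can only lower the passing probability---is exactly the ordering the paper's omitted proof implicitly relies on (it is the adaptive counterpart of the step ``$\lambda_2\geq\lambda_3\geq\lambda_4$'' in the nonadaptive proof).
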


The proof of Lemma~\ref{lemma:pass-prob-adp} is the same as that of Lemma~\ref{lemma:pass-prob}, so we omit the details.

\begin{lemma}\label{lemma:pass-prob-adp-converse}
For arbitrary state $\sigma\in\overline{\cS}$, it can pass the adaptive strategy $\Omega^\rightarrow_{\opn{opt}}$ with
probability no smaller than $1 - [1-\lambda_4(\Omega^\rightarrow_{\opn{opt}})]\epsilon$, where
$\lambda_4(\Omega^\rightarrow_{\opn{opt}})=\sin^2\theta/(1+\cos^2\theta)$ is the fourth (also the smallest) eigenvalue
of $\Omega^\rightarrow_{\opn{opt}}$. The lower bound is achieved by states in $\overline{\cS}$ that are exactly
$\epsilon$-far from $\ket{\Psi}$.
\end{lemma}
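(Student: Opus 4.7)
The plan is to mimic the structure of the proof of Lemma 4, but with the crucial difference that the spectral decomposition of $\Omega^\rightarrow_{\opn{opt}}$ (Eq.~\eqref{eq:Omega-adp-sd}) now has a genuinely smaller fourth eigenvalue $\lambda_4(\Omega^\rightarrow_{\opn{opt}})$ separated from $\lambda_2(\Omega^\rightarrow_{\opn{opt}})$, so the lower bound on the passing probability must be driven by the smallest eigenvalue rather than the second-largest. This is the essential qualitative change one has to track.

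First, I would invoke the spectral decomposition given in Eq.~\eqref{eq:Omega-adp-sd}, which fixes the orthonormal eigenbasis $\{\ket{\Psi},\ket{\Psi^\perp},\ket{HV},\ket{VH}\}$ with eigenvalues $1,\lambda_2,\lambda_2,\lambda_4$ (where $\lambda_2=\lambda_2(\Omega^\rightarrow_{\opn{opt}})$ and $\lambda_4=\lambda_4(\Omega^\rightarrow_{\opn{opt}})$). Since $\tr[\Omega^\rightarrow_{\opn{opt}}\sigma]$ depends only on the diagonal entries of $\sigma$ in this eigenbasis, I can, without loss of generality, assume
\[
\sigma = p_1\proj{\Psi} + p_2\proj{\Psi^\perp} + p_3\proj{HV} + p_4\proj{VH},\quad \sum_i p_i =1,
\]
and the constraint $\sigma\in\overline{\cS}$ translates to $p_1=\bra{\Psi}\sigma\ket{\Psi}\geq 1-\epsilon$.

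Next I would compute
\[
\tr[\Omega^\rightarrow_{\opn{opt}}\sigma] = p_1 + \lambda_2(p_2+p_3) + \lambda_4 p_4,
\]
and use $\lambda_4\leq\lambda_2$ to lower-bound the non-$p_1$ part by $\lambda_4(p_2+p_3+p_4)=\lambda_4(1-p_1)$. This gives
\[
\tr[\Omega^\rightarrow_{\opn{opt}}\sigma] \geq p_1 + \lambda_4(1-p_1) = \lambda_4 + (1-\lambda_4)p_1 \geq \lambda_4 + (1-\lambda_4)(1-\epsilon) = 1-(1-\lambda_4)\epsilon,
\]
where the final step uses $1-\lambda_4\geq 0$ together with $p_1\geq 1-\epsilon$. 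Equality is attained by taking $p_1=1-\epsilon$ and $p_4=\epsilon$ (with $p_2=p_3=0$), i.e.\ the state $\sigma = (1-\epsilon)\proj{\Psi}+\epsilon\proj{VH}$, which lies in $\overline{\cS}$ and is exactly $\epsilon$-far from $\ket{\Psi}$, confirming tightness.

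The only real subtlety — and the reason the statement differs structurally from Lemma 4 — is recognizing that the lower bound on the unknown population distribution over the $\lambda_2$- and $\lambda_4$-eigenspaces must be taken at the smallest eigenvalue. Concretely, an adversary controlling $\sigma\in\overline{\cS}$ can dump all the $\epsilon$-weight outside $\ket{\Psi}$ onto the $\ket{VH}$ eigenvector, which is the one weighted by $\lambda_4$, thereby minimizing the pass probability. There is no hidden obstacle beyond being careful that one cannot use $\lambda_2$ here the way the nonadaptive proof did, because $\lambda_2$ is strictly larger than $\lambda_4$ for $\theta\in(0,\pi/4)$.
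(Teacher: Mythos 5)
Your proof is correct and follows essentially the same route as the paper's: the spectral decomposition of $\Omega^\rightarrow_{\opn{opt}}$, the WLOG reduction to a state diagonal in $\{\ket{\Psi},\ket{\Psi^\perp},\ket{HV},\ket{VH}\}$ with $p_1\geq 1-\epsilon$, and the lower bound via the smallest eigenvalue $\lambda_4(\Omega^\rightarrow_{\opn{opt}})$. Your explicit extremal state $(1-\epsilon)\proj{\Psi}+\epsilon\proj{VH}$ is a small addition the paper leaves implicit, but the argument is otherwise identical.
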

\begin{proof}
Following the proof for Lemma~\ref{lemma:pass-prob-2} and considering the spectral decomposition Eq.~\eqref{eq:Omega-adp-sd}
of $\Omega^\rightarrow_{\opn{opt}}$, we can assume without loss of generality that $\sigma$ has the form
\begin{align}
  \sigma =  p_1\proj{\Psi} + p_2\proj{\Psi^\perp} + p_3\ket{VH} + p_4\ket{HV},\;\sum_{i=1}^4 p_i = 1,\;
            p_1 \geq 1 - \epsilon,
\end{align} 
where the last constraint $p_1 \geq 1 - \epsilon$ follows from the precondition that $\sigma\in\overline{\cS}$. Then
\begin{align}
  \tr\left[\Omega^\rightarrow_{\opn{opt}}\sigma\right]
=&\; p_1  + \sum_{i=2}^4\lambda_i(\Omega^\rightarrow_{\opn{opt}})p_i \nonumber\\
\geq&\; p_1 + \lambda_4(\Omega^\rightarrow_{\opn{opt}})\sum_{i=2}^4p_i \nonumber\\
=&\; p_1  + \lambda_4(\Omega^\rightarrow_{\opn{opt}})(1-p_1) \nonumber\\
\geq&\; (1-\epsilon)(1 - \lambda_4(\Omega^\rightarrow_{\opn{opt}})) + \lambda_4(\Omega^\rightarrow_{\opn{opt}}) \nonumber\\
=&\; 1 - (1 - \lambda_4(\Omega^\rightarrow_{\opn{opt}}))\epsilon,
\end{align}
where the first inequality follows from the fact that $\lambda_4(\Omega^\rightarrow_{\opn{opt}})$
is the smallest eigenvalue of $\Omega^\rightarrow_{\opn{opt}}$ and the second inequality follows from
 $p_1 \geq 1 - \epsilon$.
\end{proof}

To execute \textbf{Task B}, we perform $N$ tests on the outputs of the device with $\Omega^\rightarrow_{\opn{opt}}$ and record the number of tests that pass the
test as $m_\text{pass}$. Define the binary random variable $X_i$ to present the event that whether state $\sigma_i$
passes the test or not. If it passes, we set $X_i=1$; If it fails to pass the test, we set $X_i=0$. After $N$ tests, we
obtain a sequence of independent random variables $\{X_i\}_{i=1}^N$. Then $m_\text{pass} =
\sum_{i=1}^N X_i$. We analyze the expectation of each $X_i$. Let
$\mu_l:=1-[1-\lambda_2(\Omega^\rightarrow_{\opn{opt}})]\epsilon$ and
$\mu_s:=1-[1-\lambda_4(\Omega^\rightarrow_{\opn{opt}})]\epsilon$. 
If the device were `good' (it belongs to \textbf{Case 1}), then
Lemma~\ref{lemma:pass-prob-adp-converse} implies that 
$\mathbb{E}(X_i) \geq \mu_s$. 
The independent assumption together with the law of large numbers guarantee $m_\text{pass}\geq N\mu_s$,
when $N$ is sufficiently large. 
That is to say, if in practical it turns out that $m_\text{pass}\leq N\mu_s$, then we are pretty sure
the device is bad. 
On the other hand, if the device were `bad' (it belongs to \textbf{Case 2}), then
Lemma~\ref{lemma:pass-prob-adp} asserts that $\mathbb{E}(X_i) \leq \mu_l$. Again, the
independent assumption together with the law of large numbers guarantee $m_\text{pass}\leq N\mu_l$, when $N$ is sufficiently large.
That is, if in practical it turns out that $m_\text{pass}\geq N\mu_l$, then we are pretty sure
the device is good. 
Thus, we shall consider two regions regarding the value of $m_\text{pass}$: the region that $m_\text{pass} \leq N\mu_s$ and the
region that $m_\text{pass} \geq N\mu_l$. We refer to Supplementary Figure~\ref{fig:m-to-n-adp} for illustration.
\begin{figure}[!htbp]
  \centering
  \includegraphics[width=0.6\textwidth]{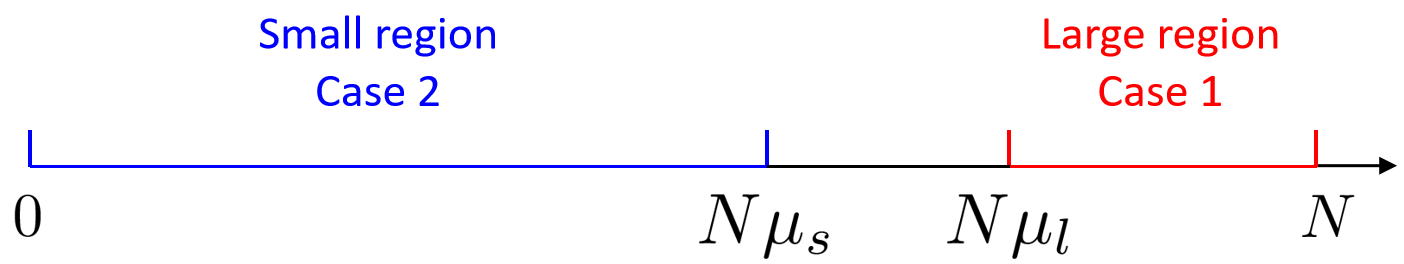}
  \caption{The number of copies $m_\text{pass}$ that pass the adaptive strategy $\Omega^\rightarrow_{\opn{opt}}$ determines the
  quality of the device. If $m_\text{pass} \leq N\mu_s$, the device belongs very likely to \textbf{Case 2} (\textbf{Small
  Region}). Conversely, the device belongs with high probability to \textbf{Case 1} (\textbf{Large Region}) if
  $m_\text{pass} \geq N\mu_l$.}
  \label{fig:m-to-n-adp}
\end{figure}

\textbf{Large Region: $m_\text{pass} \geq N\mu_l$.} In this region, the device belongs to \textbf{Case 1} with high
probability. Following the analysis methods in refs.~\cite{SDimi2018,SSaggio2019,Syu2019optimal,Szhang2019experimental}, we use the
Chernoff bound to upper bound the probability that the device belongs to \textbf{Case 2} as
\begin{align}\label{Large-adp}
\delta_l \equiv e^{-N\operatorname{D}\left(\frac{m_\text{pass}}{N}\middle\lVert\mu_l\right)}.
\end{align} 
That is to say, if the device \textit{did} belong to \textbf{Case 2}, the probability that $m_\text{pass}$ is larger than
$N\mu$ decays exponentially when $m_\text{pass}$ becomes large. In this case, we reach a conclusion:
\begin{center}
{\it The device belongs to \textbf{Case 2} with probability at most $\delta_l$.}
\end{center}
Equivalently, we conclude that
\begin{center}
{\it The device belongs to \textbf{Case 1} with probability at least $1 - \delta_l$.}
\end{center}

\textbf{Small Region: $m_\text{pass} \leq N\mu_s$.} The analysis for this region is almost the same as that of the large
region. In this region, the device belongs to \textbf{Case 2} with high probability, as only in this case $m_\text{pass}$
can be smaller than $N\mu_s$. Following the analysis methods in refs.~\cite{SDimi2018,SSaggio2019,Syu2019optimal,Szhang2019experimental},
we use the Chernoff bound to upper bound the probability that the device belongs to \textbf{Case 1} as
\begin{align}\label{small-adp}
\delta_s \equiv e^{-N\operatorname{D}\left(\frac{m_\text{pass}}{N}\middle\lVert\mu_s\right)},
\end{align}
That is to say, if the device \textit{did} belong to \textbf{Case 1}, the probability that $m_\text{pass}$ is less than
$N\mu$ decays exponentially as $m_\text{pass}$ becomes smaller. In this case, we reach a conclusion:
\begin{center}
{\it The device belongs to \textbf{Case 1} with probability at most $\delta_s$.}
\end{center}
Equivalently, we conclude that
\begin{center}
{\it The device belongs to \textbf{Case 2} with probability at least $1 - \delta_s$.}
\end{center}

\section{The experimental results for the three target states}

\noindent In this section, we give the results for three target states, i.e.,~k2, Max and HV (see Supplementary Table~\ref{truestate}). Note that the failing probability for each measurement is different for the three states~\cite{SPhysRevLett.120.170502}
\begin{subequations}\label{eq:ThreeFail}
	\begin{align}
		\text{k2}: \Delta_\epsilon &= \frac{\epsilon}{2+\sin\theta\cos\theta} \\
		\text{Max}: \Delta_\epsilon &= \frac{2\epsilon}{3} \\
		\text{HV}: \Delta_\epsilon &= \epsilon
	\end{align}
\end{subequations}
Here we discuss the results of the nonadaptive strategy. In Eq.~\eqref{eq:ThreeFail}, $\epsilon$ can be replaced with $1-F$, where $F$ is the fidelity for corresponding states. From the experimental data, the stable passing probability $m_\text{pass}/N$ can be obtained, which are 0.9986$\pm$0.0002, 0.9982$\pm$0.0002 and 0.9992$\pm$0.0001, respectively. With $\mu=1-\Delta_\epsilon=m_\text{pass}/N$, we have an estimation for the fidelities of these three states, which are $F_{\text{k2}}=0.9964\pm0.0002$, $F_{\text{Max}}=0.9973\pm0.0002$ and $F_{\text{HV}}=0.9992\pm0.0001$. The variation of $\delta$ versus $N$ for the three target states is shown in Supplementary Figure~\ref{fig:ThreeDelta} in the log scale. To make the comparison fairly, we take $\eta:=|\mu-m_\text{pass}/N|$ to be same for the three states. In this condition, the slope of decline (i.e.,~$g$ in $\log\delta \propto g\cdot N$) is $g_{HV}>g_{k2}>g_{Max}$, which indicates that the state with a larger passing probability will have a faster decline.
\begin{figure}[!htbp]
  \centering
  	\includegraphics[width=0.7\textwidth]{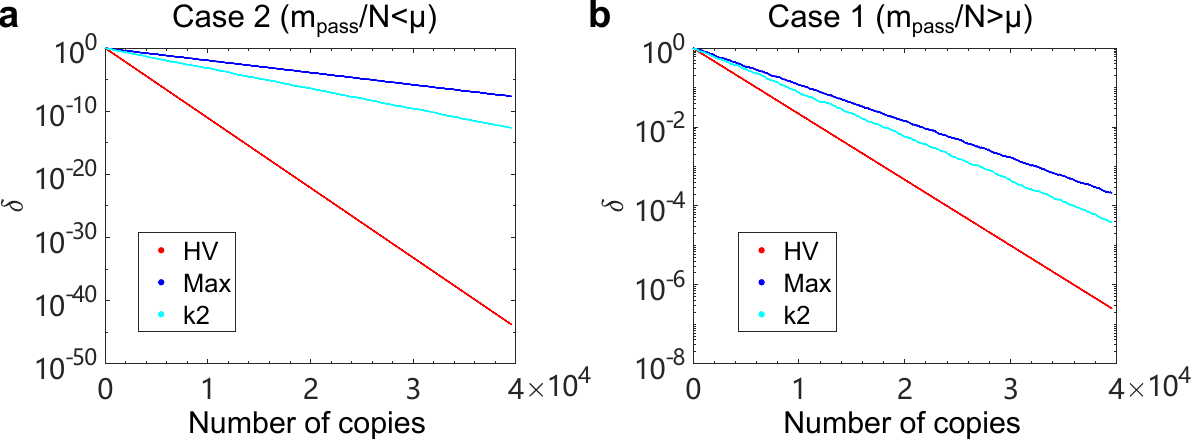}
  \caption{Experimental results on the variation of $\delta$ versus the number of copies for the three target states. Here the difference $\eta:=|\mu-m_\text{pass}/N|$ is set the same for the three states. The small region (\textbf{a}) decrease faster than the large region (\textbf{b}), which indicates it is easier to verify \textbf{Case 2} than \textbf{Case 1}. The state with a larger passing probability will have a faster slope of decline. Note that the experimental data symbols shown in the figure looks like lines due to the dense data points.}
  \label{fig:ThreeDelta}
\end{figure}

\begin{figure}[!htbp]
  \centering
  	\includegraphics[width=0.7\textwidth]{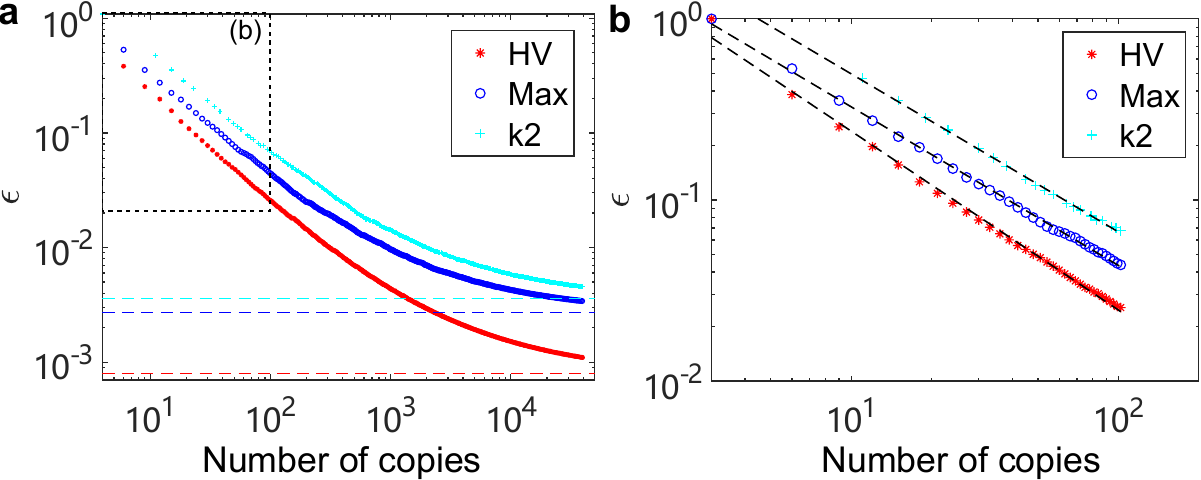}
  \caption{Experimental results on the variation of $\epsilon$ versus the number of copies for the three target states. The $\epsilon$ will finally approach the asymptotic line of the infidelities 0.0036$\pm$0.0002, 0.0027$\pm$0.0002 and 0.0008$\pm$0.0001 for these three states. \textbf{b} is the linear region of the enlarged dashed box in \textbf{a}. We fit the linear region (dashed black line) and obtain slopes of -0.88$\pm$0.03, -0.87$\pm$0.10 and -0.99$\pm$0.09 for k2, Max and HV states, respectively.}
  \label{fig:ThreeEpsilon}
\end{figure}
In Supplementary Figure~\ref{fig:ThreeEpsilon}, we present the results of $\epsilon$ versus $N$. We can see that $\epsilon$ first decreases linearly with $N$ and then approaches a asymptote in the log-log scale. The asymptotic values for these three states are the infidelities calculated from Eq.~\eqref{eq:ThreeFail}, which are 0.0036$\pm$0.0002, 0.0027$\pm$0.0002 and 0.0008$\pm$0.0001, respectively. To see the scaling of $\epsilon$ versus $N$, the region for the number of copies from 1-100 is enlarged and shown in Supplementary Figure~\ref{fig:ThreeEpsilon}b. The linear scaling region is fitted with slopes of -0.88$\pm$0.03, -0.87$\pm$0.10 and -0.99$\pm$0.09, respectively. The error bars are obtained by fitting different groups of $\epsilon$ versus $N$ when considering the errors of $\epsilon$ (shown in Fig.~5 of the main text).

\section{Explanations for the results in Task A}
\noindent In this section, we explain on the difference of experimental and theoretical improvements for the number of measurements required in \textbf{Task A} of main text. Consider a general state $\rho$ produced in the experiment that is exactly $\epsilon$-far from $\ket{\Psi}$, which can
be diagonalized in the bases $\{\ket{\Psi},\ket{\Psi^\perp},\ket{HV},\ket{VH}\}$ as:
\begin{align}
    \rho = (1-\epsilon)\ket{\Psi}\bra{\Psi}+p_2\ket{\Psi^\perp}\bra{\Psi^\perp}+p_3\ket{HV}\bra{HV}+p_4\ket{VH}\bra{VH},
\end{align}
where the normalization condition requires $p_2+p_3+p_4=\epsilon$.
Given the spectral decomposition of $\Omega_{\opn{opt}}$ in Eq.~\eqref{eq:Omega-non-sd},
the passing probability of $\rho$ that passes the nonadaptive strategy $\Omega_{\opn{opt}}$ can be expressed as
\begin{align}\label{eq:passNon}
  \tr\left[\Omega_{\opn{opt}}\rho\right]
&= 1-\epsilon + \lambda_2(\Omega_{\opn{opt}})(p_2+p_3+p_4) \nonumber \\
&= 1- [1 - \lambda_2(\Omega_{\opn{opt}})]\epsilon \nonumber \\
&= 1- \frac{1}{2+\sin\theta\cos\theta}\epsilon \nonumber \\
&\equiv 1 - \Delta_\epsilon.
\end{align}
From Eq.~\eqref{eq:passNon} we can see that the passing probability of nonadaptive strategy is independent on $p_2$,
$p_3$ and $p_4$.

Likewise, the passing probability of the adaptive strategy $\Omega^\rightarrow_{\opn{opt}}$ can be expressed as,
\begin{align}\label{eq:passAdp}
  \tr\left[\Omega^\rightarrow_{\opn{opt}}\rho\right]
&= 1-\epsilon + \lambda_2(\Omega^\rightarrow_{\opn{opt}})(p_2+p_3)
              + \lambda_4(\Omega^\rightarrow_{\opn{opt}})p_4 \nonumber \\
&= 1-\epsilon + \lambda_2(\Omega^\rightarrow_{\opn{opt}})(p_2+p_3+p_4)
   - \left(\lambda_2(\Omega^\rightarrow_{\opn{opt}}) - \lambda_4(\Omega^\rightarrow_{\opn{opt}})\right)p_4\nonumber \\
&= 1 - \left(1 - \lambda_2(\Omega^\rightarrow_{\opn{opt}})\right)\epsilon
   - \left(\lambda_2(\Omega^\rightarrow_{\opn{opt}}) - \lambda_4(\Omega^\rightarrow_{\opn{opt}})\right)p_4\nonumber \\
&= 1 - \frac{1}{1 + \cos^2\theta}\epsilon - \frac{\cos^2\theta - \sin^2\theta}{1 + \cos^2\theta}p_4\nonumber \\
&\equiv 1-\Delta_\epsilon^\rightarrow.
\end{align}
We can see that the passing probability of adaptive is not only dependent on the $\epsilon$, but also dependent on $p_4$. Note that the number of measurements required to obtain a $1-\delta$ confidence and infidelity $\epsilon$ is given by $n\propto\frac{1}{\Delta_\epsilon}\ln\frac{1}{\delta}$. Therefore, the improvement of adaptive relative to nonadaptive is,
\begin{align}
	\frac{n^\text{Non}}{n^\text{Adp}}\propto\frac{\Delta_\epsilon^\rightarrow}{\Delta_\epsilon}.
\end{align}
From Eqs.~\eqref{eq:passNon} and~\eqref{eq:passAdp}, we can see that the ratio of the coefficient of $\epsilon$, i.e.,~$(2+\sin\theta\cos\theta)$:$(1+\cos^2\theta)$,
is just the theoretical prediction, 
which is about 1.6 times for the $k2$ state. The remaining term is dependent on $p_4$ in Eq.~\eqref{eq:passAdp}. On the other hand, the experimental limited fidelity of the EOMs' modulation will result in a lower passing probability of the adaptive measurement. This also leads to a larger $\Delta_\epsilon^\rightarrow$ in Eq.~\eqref{eq:passAdp}, which improves the ratio. Based on our experimental data, we give a quantitative estimation. The failing probability of nonadaptive is about $\Delta_\epsilon\sim0.0014$ (see Supplementary Table~\ref{truestate}). If the passing probability of adaptive is decreased by 0.007, the number of copies will have a 0.007/0.0014$\sim$5 times reduction due to such a small denominator of $\Delta_\epsilon$. This leads to the overall about six times fewer number of copies cost by the adaptive strategy compared with the nonadaptive strategy.

\section{Comparison of nonadaptive and adaptive in Task B}
\noindent In Supplementary Figure~\ref{fig:mN}, we give the variation of experimental passing probability $m_\text{pass}/N$ versus the number of copies. For clearness, we also plot the expected passing probability $\mu=1-\Delta_\epsilon$ which is chosen by the verifier. The $\epsilon_\text{min}$ is adopted for \textbf{Case 2} and the $\epsilon_\text{max}$ is adopted for \textbf{Case 1}. With enough number of copies of quantum states, the passing probability $m_\text{pass}/N$ will reach a stable value. It can be seen that the experimental passing probability is smaller than the expectation for \textbf{Case 2} while it is larger for \textbf{Case 1}. The stable passing probability of adaptive 0.9914$\pm$0.0005 is smaller than the nonadaptive strategy 0.9986$\pm$0.0002 due to their different infidelities.
\begin{figure}[!htbp]
  \centering
  	\includegraphics[width=0.7\textwidth]{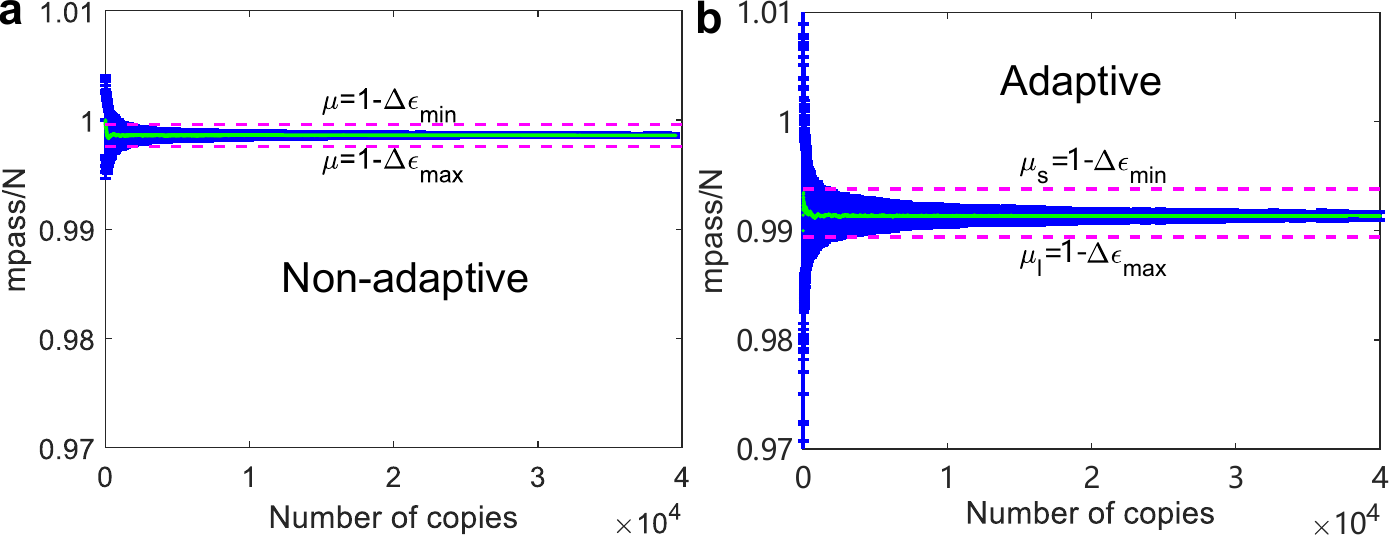}
  \caption{Experimental results on the variation of passing probability $m_\text{pass}/N$ versus number of copies for $\epsilon=\epsilon_\text{min}$ and $\epsilon=\epsilon_\text{max}$. \textbf{a}~are for the nonadaptive strategy, while \textbf{b}~are for the adaptive strategy. The dashed magenta line is the corresponding expected passing probability $\mu=1-\Delta\epsilon_\text{min}$/$\mu=1-\Delta\epsilon_\text{max}$ chosen by the verifier for the nonadaptive strategy. For adaptive strategy, the $\epsilon_\text{min}$ ($\epsilon_\text{max}$) is chosen so that $m_\text{pass}/N<\mu_s$ ($m_\text{pass}/N>\mu_l$). The experimental passing probability $m_\text{pass}/N$ reaches a stable value after about 1000 number of copies. The adaptive $m_\text{pass}/N$ is smaller than the nonadaptive. The blue is the experimental error bar, which is obtained by 100 rounds for each copy.}
  \label{fig:mN}
\end{figure}

In the main text, we see that the scaling behaviour of parameters $\delta$ and $\epsilon$ versus number of copies $N$ is different for nonadaptive and adaptive strategies in \textbf{Task B}. Here, we give some analyses to explain.

For the variation of $\delta$ versus $N$, the speed of descent is determined by the difference between experimental passing probability $m_\text{pass}/N$ and expected passing probability $\mu=1-\Delta_\epsilon$. Because the nonadaptive and adaptive strategy have different $m_\text{pass}/N$ and $\mu$, the behaviour of $\delta$ versus $N$ is also different. To have a comprehensive understanding, we consider two situations. First, we assume they have the same $m_\text{pass}/N$ and the same $\epsilon$. The expected passing probabilities are $\mu_\text{Non}=1-\epsilon/(2+\sin\theta\cos\theta)$ and $\mu_\text{Adp}=1-\epsilon/(2-\sin^2\theta)$ for nonadaptive and adaptive strategies, respectively. We can see that $\mu_\text{Non}$ is larger than $\mu_\text{Adp}$ under the same $\epsilon$. In Supplementary Figure~\ref{fig:SamemN}, we plot the variation of $\delta$ versus $\mu$ using equation $\delta = e^{-N\operatorname{D}\left(m_\text{pass}/N\middle\lVert\mu\right)}$, irrespective of the specific strategy. Therefore, the curves of nonadaptive and adaptive strategy coincide with each other in this situation. The expected probabilities $\mu_\text{Non}$ and $\mu_\text{Adp}$ are located at the different positions of the horizontal axis, as shown in the inset of Supplementary Figure~\ref{fig:SamemN}. Note that $\mu{\leq}m_\text{pass}/N$ is for \textbf{Case 1} and $\mu{\geq}m_\text{pass}/N$ is for \textbf{Case 2} region when we choose different $\mu$, i.e.,~$\epsilon$ ($\mu=1-\Delta_\epsilon$). From the figure we can see that the larger the difference between $\mu$ and $m_\text{pass}/N$, the smaller of the $\delta$. This indicates that $\delta$ decreases more quickly for adaptive strategy in the \textbf{Case 1} region, whereas it decreases more quickly for nonadaptive strategy in the \textbf{Case 2} region.
\begin{figure}[!htbp]
  \centering
  	\includegraphics[width=0.5\textwidth]{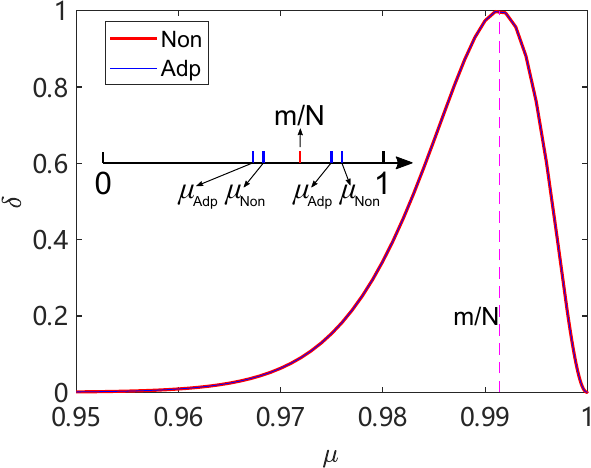}
  \caption{The parameter $\delta$ changes with the anticipating passing probability $\mu$ when adopting the same $m_\text{pass}/N$. With the same $\epsilon$, adaptive will always have a smaller $\mu$ than nonadaptive. Therefore, the adaptive will decrease faster than nonadaptive due to a larger difference in the $\mu{\leq}m_\text{pass}/N$ region. In the $\mu{\geq}m_\text{pass}/N$ region, the trend is opposite.}
  \label{fig:SamemN}
\end{figure}

Second, we consider the situation where the nonadaptive and adaptive strategies have different passing probabilities $m_\text{pass}/N$, as in our experiment. For comparison, we adopt $m_\text{pass}/N=0.9986$ for nonadaptive and $m_\text{pass}/N=0.9914$ for adaptive based on our experimental data. The variation of $\delta$ versus $\mu$ is shown in Supplementary Figure~\ref{fig:DifmN}. We can see that the comparison of nonadaptive and adaptive depends on the choice of $\mu$. In the figure, we label the crosspoint of the two curves as $\mu_0$. If the $\epsilon$ is chosen such that both $\mu_\text{Non}$ and $\mu_\text{Adp}$ are smaller than $\mu_0$, the nonadaptive will drop faster than adaptive because nonadaptive has a smaller $\delta$. On the contrary, if both $\mu_\text{Non}$ and $\mu_\text{Adp}$ are larger than $\mu_0$, the adaptive will drop faster than nonadaptive due to the smaller $\delta$. If $\mu_\text{Non}$ and $\mu_\text{Adp}$ are located at two sides of $\mu_0$, the one which has a smaller $\delta$ in Supplementary Figure~\ref{fig:DifmN} will have a faster decline.
\begin{figure}[!htbp]
  \centering
  	\includegraphics[width=0.5\textwidth]{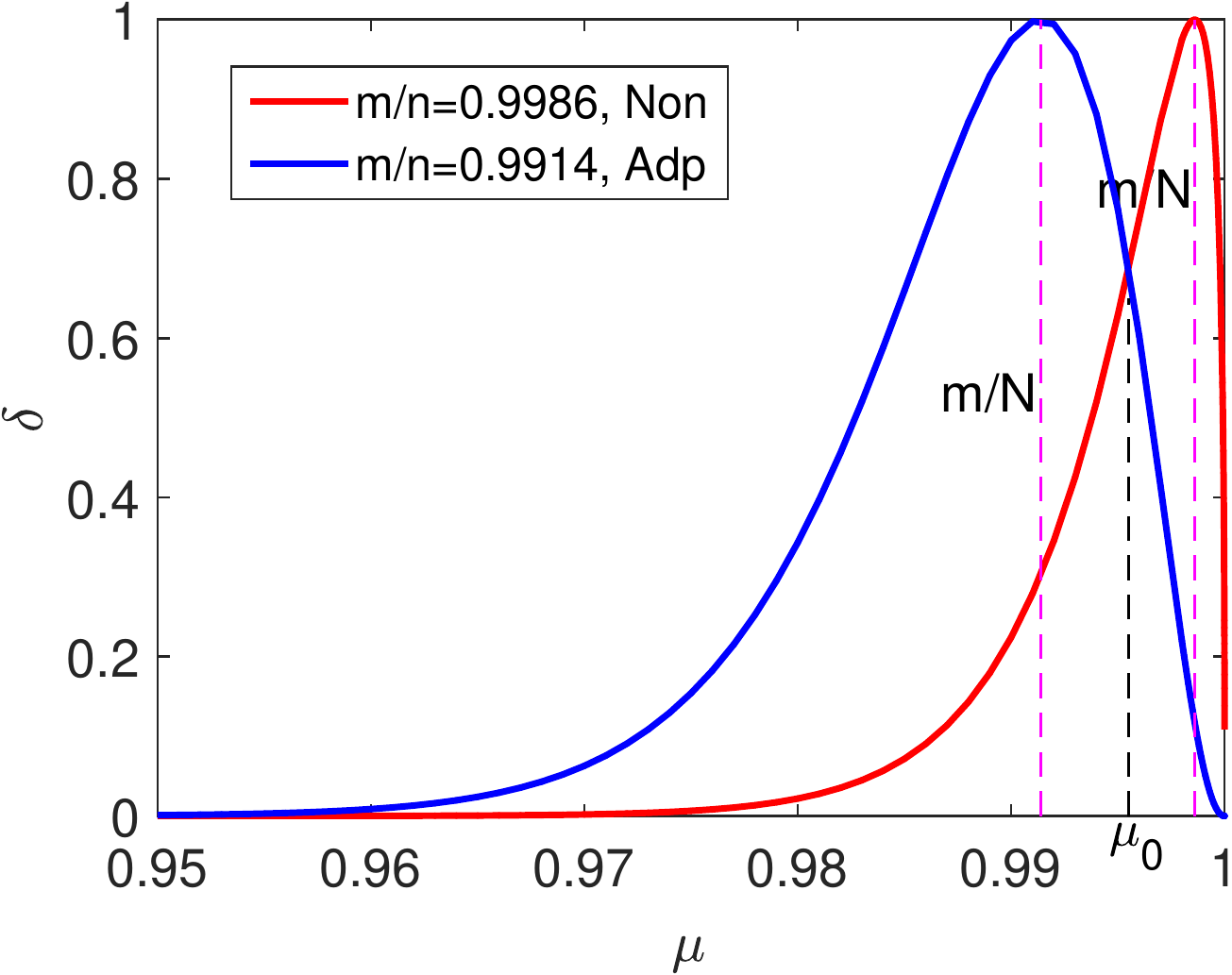}
  \caption{The parameter $\delta$ changes with the anticipating passing probability $\mu$ for different $m_\text{pass}/N$. In this situation, the one which adopts $\mu$ that results in a smaller $\delta$ will have a faster decline.}
  \label{fig:DifmN}
\end{figure}

\begin{figure}[!htbp]
  \centering
  	\includegraphics[width=0.5\textwidth]{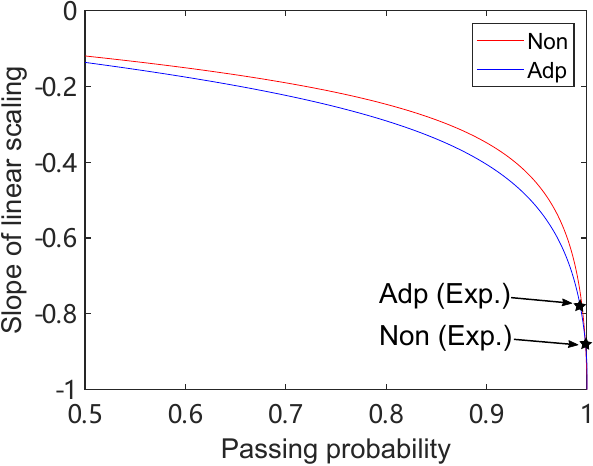}
  \caption{The slope of linear scaling versus the passing probability. The slope decreases with the increase of the passing probability. Our experimental fitting slopes for the adaptive and nonadaptive strategies in the linear region is shown as the black stars. The slope approaches the optimal -1 when the passing probability are close 1. Due to the smaller passing probability of adaptive compared with nonadaptive, the absolute value of slope for adaptive strategy is also smaller. When the passing probability is large enough, the two strategies almost have an equal linear scaling slope at the same passing probability.}
  \label{fig:Slope}
\end{figure}
For $\epsilon$ versus the number of copies $N$, the scaling is determined by the passing probability $m_\text{pass}/N$, i.e.,~the fidelity of our generated states. The larger the passing probability, the faster the infidelity parameter $\epsilon$ decreases with the number of copies. For both nonadaptive and adaptive, $\epsilon$ will finally approach a asymptotic value. In Supplementary Figure~\ref{fig:Slope}, we plot the fitting slope of the linear scaling region versus the passing probability for both the nonadaptive and adaptive strategies. We can see that the adaptive strategy will have an advantage compared with nonadaptive strategy at a small passing probability. However, the slope tends to the optimal value -1 when the passing probability is large enough. This indicates that the optimal scaling can only be obtained at a high fidelity for the generated states. In the region of high fidelity, there is minor differences for the nonadaptive and adaptive strategies if we obtain a same passing probability. However, the adaptive strategy has a smaller passing probability than the nonadaptive strategy for our experimental data. Therefore, the descent speed of adaptive strategy is slower than the nonadaptive strategy. This can be seen quantitatively from their fitting slopes -0.78$\pm$0.07 (Adp) and -0.88$\pm$0.03 (Non), shown as stars in Supplementary Figure~\ref{fig:Slope}.

\vspace*{1.0cm}
\noindent{\bfseries Supplementary References}

\end{document}